\def\bR{\mathbb{R}}
\def\bT{\mathbb{T}}
\def\bN{\mathbb{N}}
\def\bZ{\mathbb{Z}}
\def\ph{\varphi}
\def\bC{\mathbb{C}}
\def\tbf{\textbf }
\def\t{\text }
\def\wh{\widehat}
\def\be{\begin{equation}}
\def\ee{\end{equation}}
\def\wh{\widehat}
\newtheorem{theorem}{Theorem}   
\newtheorem{cor}[theorem]{Corollary}
\newtheorem{prop}[theorem]{Proposition}
\newtheorem{lemma}[theorem]{Lemma}
\title{On the Leading Order Term of the Lattice \\ Yang-Mills Free Energy}
\author{Christian Brennecke\thanks{Institute for Applied Mathematics, University of Bonn, Endenicher Allee 60, 53115 Bonn, Germany}
}
\begin{document}

\maketitle

\begin{abstract}
In \cite{Cha1}, the leading order term of the free energy of $\t{U(N)}$ lattice Yang-Mills theory in $\Lambda_n=\{0,\ldots,n\}^d\subset \bZ^d$ was determined, for every $N\geq 1$ and $d\geq 2$. The formula is explicit apart from a contribution $K_d$ which corresponds to the limiting free energy of lattice Maxwell theory with boundary conditions induced by the axial gauge. By suitably adjusting the boundary conditions, we provide an equivalent characterization of $K_d$ that admits its explicit computation.  
\end{abstract}

%%%%%%%%%%%%%%%%%%%%%%%%%%%%%%%%%%%%%%
%%%%%%%%%%%%%%%%%%%%%%%%%%%%%%%%%%%%%%
%%%%%%%%%%%%%%%%%%%%%%%%%%%%%%%%%%%%%%
\section{Introduction and Main Result} 

A major open problem in mathematical physics is to rigorously construct non-Abelian Euclidean Yang-Mills theories in $d\geq 3$ dimensions which might then be used to put the corresponding quantum Yang-Mills theories on firm ground. After great progress in the past (see e.g.\ \cite{Balaban1, Balaban2, Balaban3, Balaban4, Balaban5, Balaban6, Balaban7,Balaban8,Balaban9,Balaban10,Balaban11,Balaban12, Brydges1, Brydges2,Federbush6, GlimmJaffe, JaffeWitten, MagRivSen,OstSei, Seiler} as well as \cite{ChaReview} for a recent review that contains a thorough list of references including results on Abelian models and results in dimension $d=2$), the development of novel methods has lead to renewed interest in this and related questions, see e.g.\ \cite{Cha1, Cha2, ChaReview, Chev, Cao, Cha3, ChaCheHaiShen1, ShenZhuZhu1, CheShen1, CPS, CC1, CC2, ChaCheHaiShen2, Adhikari, SSZ, AdhikariCao, ShenZhuZhu2, Cha4, CheShen2, ChaYakir}. Major progress has in particular been obtained towards the construction of Euclidean Yang-Mills theories in $d=3$ including the rigorous construction of potential state spaces for such theories with compact gauge group $G$ on the unit torus $\bT^3$. This was recently obtained in \cite{ChaCheHaiShen2, CheShen2} based on the methods of stochastic quantization and in \cite{CC1, CC2} based on the regularization of Wilson loop observables through the Yang-Mills heat flow, an idea that goes back to \cite{CharGross}. In addition to the state space, \cite{CC1} provides a tightness criterion for constructing Euclidean Yang-Mills theories from approximate Yang-Mills theories (such as lattice Yang-Mills theories) with Gaussian free field like short distance behavior. Such behavior is expected based on perturbation theory and was rigorously established in dimension $d=2$ in \cite{Chev}. 

Motivated by these developments, we follow in this note \cite{Cha1} and consider Euclidean lattice Yang-Mills theories in $d\geq 2$ with gauge group $\t{U(N)}$, for $N\geq 1$. A basic important question in this setting is to determine the leading order term of the free energy as the volume tends to infinity. This was solved in \cite{Cha1} and our main result is related to it. Before stating this, let us introduce the precise setup.

Let $\Lambda \subset \bZ^d$ be a finite subset and let $G$ denote a closed subgroup of the group of $N\times N$ unitary matrices, $G\subset \t{U(N)}=\{ U\in \bC^{N\times N}: U^*U = \tbf 1_{\bC^N} \}$, with Lie algebra $\frak{g}$. We denote by $ E_\Lambda$ the set of oriented edges $ e=(x,y)\in E_\Lambda$, which are pairs of adjacent vertices $x,y \in \Lambda$ such that $x \prec y \,(=x+e_i$ for some $i\in [d]=\{1,\ldots,d\}$) in the lexicographic ordering. The state space of the theory is $G^{E_\Lambda}$, the set of maps from $E_\Lambda$ to $G$, called lattice gauge fields. If $U = (U_e)_{e\in E_\Lambda}\in G^{E_\Lambda}$ is a field configuration, $e = (x,y)\in E_\Lambda$ and $e'=(y,x)$ denotes the negatively oriented version of $e$, we set $U_{e'} = U_{e}^* = U_e^{-1}$. A plaquette $p$ is a square that is bounded by four edges in $E_\Lambda$. It can be characterized uniquely by the triple $p=(x,j,k)\in \Lambda\times [d]^2$, if its vertices are given by $ x, x+e_j, x+e_j+e_k, x+e_k \in\Lambda$ for some $1\leq j < k \leq d$. Here and in the following, $e_1, e_2,\ldots, e_d \in\bR^d$ denote the canonical basis vectors of $\bR^d$. Note that in this characterization, $x\in \Lambda_n$ is the smallest (with regards to the lexicographic ordering) vertex contained in $p$. The set of plaquettes in $\Lambda$ is denoted by $P_\Lambda$. For a configuration $U\in G^{E_\Lambda}$ and a plaquette $p=(x,j,k)\in P_\Lambda$, we set 
		\[\begin{split}
		U_p &= U_{(x,x+e_j)} U_{(x+e_j, x+e_j+e_k)} U_{(x+e_k, x+e_j+e_k)}^*U_{(x, x+e_k)}^* \\
		& = U_{(x,x+e_j)} U_{(x+e_j, x+e_j+e_k)} U_{(x+e_j+e_k, x+e_k)}U_{(x+e_k, x)} \in\t{U(N)}. \end{split}\]
Then, the Wilson action $S_\Lambda: G^{E_\Lambda}\to [0,\infty)$ is defined by 
		\be\label{eq:defWilson} S_\Lambda(U) = \sum_{p\in P_\Lambda} \text{Re} \big(\text{tr}\, (\tbf 1_{\bC^N}- U_p)\big)  =  \frac12 \sum_{p\in P_\Lambda} \| \tbf 1_{\bC^N}- U_p \|^2. \ee
Here, $ \|A\|^2 = \sum_{i,j=1}^N|a_{ij}|^2 $, for  $A=(a_{ij})_{i,j=1}^N\in\bC^{N\times N}$, equals the square of the Hilbert-Schmidt norm which is induced by the inner product $\langle A,B\rangle= \text{tr}\, A^*B$. Notice that the second equality in \eqref{eq:defWilson} uses that $U_p\in \t{U(N)}$ so that $\langle U_p, U_p\rangle= \langle \tbf 1_{\bC^N}, \tbf 1_{\bC^N}\rangle$.
		%\[2\text{Re}\big(\text{tr}\, (\tbf 1_{\bC^N}- U_p)\big) =\langle \tbf 1_{\bC^N}, \tbf 1_{\bC^N}\rangle+\langle U_p, U_p\rangle-  \big(\langle \tbf 1_{\bC^N}, U_p\rangle +\text{h.c.}\big)=   \| \tbf 1_{\bC^N}- U_p \|^2.\] 

Given the above notions, the Euclidean lattice (pure) Yang-Mills measure $\mu_{\Lambda,g}$ with coupling strength $g>0$ is the measure on $U^{E_\Lambda}$ defined by 
		\be\label{eq:defYMmeas} \mu_{\Lambda,g}(dU) = \frac{1}{Z_{\Lambda,g}} \exp\Big(-\frac1{g^2} S_\Lambda(U)\Big) \sigma_\Lambda(dU), \ee
where $\sigma_\Lambda(dU) = \prod_{e\in E_\Lambda} dU_e$ denotes the normalized product Haar measure on $G^{E_\Lambda}$. The normalizing constant $Z_{\Lambda,g} $ in \eqref{eq:defYMmeas} is the partition function of the model and it equals
		\[ Z_{\Lambda,g} = \int_{G^{E_\Lambda}}  \exp\Big(-\frac1{g^2} S_\Lambda(U)\Big)\,\sigma_\Lambda(dU). \]
The corresponding free energy per site $F_{\Lambda,g}$ is defined by 
		\[  F_{\Lambda,g} = \frac{\log Z_{\Lambda,g}}{|\Lambda|}. \]

From now on, we focus on the lattice $\Lambda = \Lambda_n = [0,n]^d\cap \bZ^d= \{0,\ldots,n\}^d$ so that $|\Lambda_n|=n^d$. For simplicity, we abbreviate $ E_n=E_{\Lambda_n}, P_{n} = P_{\Lambda_n}, \mu_{n,g}= \mu_{\Lambda_n,g}, Z_{\Lambda_n,g} = Z_{n,g}$ and $ F_{n,g}=F_{\Lambda_n,g} $. The main result of \cite{Cha1} is a derivation of the limit $\lim_{n\to\infty,g\to 0} F_{n,g}$, where $n\to\infty$ and $g\to0$ may vary independently. As pointed out in \cite{Cha1}, the small coupling limit $g\to 0$ is relevant for lattice Yang-Mills theory on the scaled lattice $ \Lambda_{n,\epsilon}=\{0, \epsilon,\ldots,\epsilon n\}^d\subset \epsilon \bZ^d$. By relabeling the edges, this is equivalent to \eqref{eq:defYMmeas} for $\Lambda=\Lambda_n$ with $g^{2}$ replaced by $g^{2}_\epsilon=g^{2}\epsilon^{4-d}$. Hence, as long as $d<4$, we have that $g_\epsilon^{2}\to 0$ as the lattice spacing $\epsilon \to 0$ tends to zero. A precise understanding of $Z_{n,g}$ and $\mu_{n,g}$ as $n\to\infty, g\to 0$ might therefore be helpful for the construction of continuum Yang-Mills measures, following the strategy as proposed in \cite{CC1}.

In \cite{Cha1}, the computation of the leading order term of $F_{n,g}$ is obtained by fixing the axial gauge and by approximating $\mu_{n,g}$ through an effective Gaussian theory sometimes referred to as lattice Maxwell theory. For a general introduction to lattice Maxwell theory in the context of renormalization of lattice gauge theories, see e.g.\ \cite[Section 22.6]{GlimmJaffe} and for a more detailed analysis, see e.g.\ \cite{Balaban1, Balaban2,Balaban3}. To state this more precisely, we need to introduce some further notation. Consider the maximal tree $T_n$ in $\Lambda_n$ with root $(0,\ldots, 0)\in \Lambda_n$ and edges in $E_n$. It contains an edge $e=(x,y)\in E_{n}$ if and only if its vertices $x,y\in \Lambda_n$ are of the form
		\be\label{eq:en0}  x= (x_1, x_2,\ldots, x_j, 0,\ldots,0),\;\;\; y=x+e_j= (x_1, x_2,\ldots, x_j +1, 0,\ldots,0), \ee
for some $j\in[d]$. This includes in particular every edge of the form $e=(x,x+e_d)\in E_n$. We denote the set of edges in $T_n$ by $E_{n}^0 = E_{T_n}\subset E_n$ and we identify
		\[ \bR^{E_n^1}\simeq \big\{ u \in  \bR^{E_{n}}: u_e =0, \,\forall e\in E_{n}^0\big\}   \;\text{ for }\; E_n^1 = E_n\setminus E_n^0. \] 
Loosely speaking, elements in $\bR^{E_n^1}$ correspond to a Lie algebra component of the gauge field connections in the axial gauge. For $e=(x,y)\in E_{n}$, we set $u_{e'}=-u_e$ if $e'=(y,x)$ denotes the negatively oriented version of $e\in E_n$ and for $p=(x,j,k)\in  P_{n}$, we set 
		\be\label{eq:defap}\begin{split} u_p &= u_{(x,x+e_j)} + u_{(x+e_j,x+e_j+e_k)} - u_{(x+e_k,x+e_j+e_k)}- u_{(x,x+e_k)}\\
		&= u_{(x,x+e_j)} + u_{(x+e_j,x+e_j+e_k)} + u_{(x+e_j+e_k,x+e_k)}+ u_{(x+e_k,x)}.   
		\end{split} \ee
Finally, we define the non-negative symmetric quadratic form $\Sigma_n: \bR^E_{n}\times  \bR^E_{n}\to \bR$ by
		\be\label{eq:defSigma}\Sigma_n(u,v) =\sum_{p\in P_{n}} u_p v_p\ee
and we denote its restriction to $\bR^{E_n^1}$ by $\Sigma_n^0 :\bR^{E_n^1}\times \bR^{E_n^1}\to \bR$. For simplicity of notation, we denote the associated symmetric matrices $\Sigma_n \in\bR^{|E_n|\times|E_n|}, \Sigma_n^0 \in \bR^{|E_n^1| \times |E_n^1|}$ again by $\Sigma_n$ and, respectively, $\Sigma_n^0$. Since $\Sigma_n\geq 0$ (and hence $\Sigma_n^0\geq 0$), it is clear that 
		\[- \log \det \Sigma_n^0 = -\sum_{j=1}^{|E_n^1|} \log  \lambda_j\big(\Sigma_n^0\big)= -\t{tr} \log \Sigma_n^0\in (-\infty,\infty]\] 
is well-defined. Here, $ \lambda_j(\Sigma_n^0) \geq 0$ denotes the $j$-th min-max value of $\Sigma_n^0$, for $1\leq j\leq |E_n^1|$, recalled in \eqref{eq:minmax} below. Since the restriction from $E_n$ to $E_n^1$ is based on the choice of the axial gauge, we refer in the sequel to the Gaussian measure on $\bR^{E_n^1} $ with covariance $\Sigma_n^0$ as lattice Maxwell theory in the axial gauge. The main result of \cite{Cha1} reads as follows. 

%%%%%%%%%%%%%%%%%%%%%%%%%%%%%%
%%%%%%%%%%%%%%%%%%%%%%%%%%%%%%
\begin{theorem}[{\cite[Theorem 2.1]{Cha1}}] \label{thm:Cha} Let $d\geq 2$, $N\geq 1$ and set $G=\emph{U(N)}$. Then the limit 
		\be \label{eq:Kd} K_d = \lim_{n\to\infty} \frac{-\emph{tr} \log \Sigma_n^0}{2n^d} \in\bR \ee
exists and we have that
		\be\label{eq:freee}  F_{n,g} =  \frac{|E_n^1|}{2n^{d}} N^2\log g^2  + (d-1)\log \frac{\prod_{j=1}^{N-1} j!}{(2\pi)^{N/2}} +N^2 K_d + o(1),  \ee
where $|E_n^1|=  (d-1)n^d-dn^{d-1}+1$ and where $\lim_{n\to\infty, g\to0} o(1)=0$.
\end{theorem}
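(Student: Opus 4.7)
The plan is to combine axial gauge fixing with a small-$g$ Gaussian approximation of $\mu_{n,g}$ around the flat connection $U\equiv \tbf 1_{\bC^N}$. Since $S_n$ and $\sigma_{\Lambda_n}$ are both gauge invariant, I would first use gauge invariance to set $U_e=\tbf 1_{\bC^N}$ for every tree edge $e\in E_n^0$, reducing $Z_{n,g}$ to an integral over $G^{E_n^1}$. For $g\to 0$, the Wilson weight concentrates on $U_p\approx \tbf 1_{\bC^N}$, so I would parametrize $U_e=\exp(igX_e)$ for $e\in E_n^1$ with Hermitian $X_e$ (so that $igX_e\in\mathfrak{u}(N)$) and Taylor expand. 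Using $U_p=\tbf 1_{\bC^N}+igX_p-\frac{g^2}{2}X_p^2+\cdots$, where $X_p$ is the matrix-valued plaquette sum defined analogously to $u_p$ in \eqref{eq:defap}, the Wilson action satisfies $g^{-2}S_n(U)=\frac{1}{2}\Sigma_n^0(X,X)+O(g)\cdot P(X)$ for a polynomial $P$ in the matrix entries of $X$.

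Next I would perform the change of variables carefully. Near the identity the normalized Haar measure reads $dU_e=\t{vol}(\t{U(N)})^{-1}g^{N^2}dX_e(1+O(g^2))$, with Macdonald's formula $\t{vol}(\t{U(N)})=(2\pi)^{N(N+1)/2}/\prod_{j=1}^{N-1}j!$. Splitting the integration into a ball of radius $g^{-1+\delta}$ around $X=0$ and its complement, the bulk region is handled by the Gaussian $\exp(-\frac{1}{2}\Sigma_n^0(X,X))$ with cubic errors contributing $o(1)$ to $F_{n,g}$, while the complement is controlled using strict positivity of $\Sigma_n^0$ on $\bR^{E_n^1}$ (which holds because the tree $T_n$ spans $\Lambda_n$, killing the gauge zero modes). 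The bulk integral decouples into $N^2$ real Gaussians with quadratic form $\Sigma_n^0$, evaluating to $(2\pi)^{N^2|E_n^1|/2}(\det\Sigma_n^0)^{-N^2/2}$. Taking logarithms, dividing by $n^d$, and using $|E_n^1|/n^d\to d-1$ together with the identity $-\log\t{vol}(\t{U(N)})+\frac{N^2}{2}\log(2\pi)=\log\frac{\prod_{j=1}^{N-1}j!}{(2\pi)^{N/2}}$ yields \eqref{eq:freee}.

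For the existence of $K_d$, I would observe that $\Sigma_n^0$ is the axial-gauge restriction to $E_n^1$ of the discrete $d^*d$ operator on $1$-forms, which is translation invariant in the bulk with boundary corrections of relative size $O(1/n)$. Convergence of $-\t{tr}\log\Sigma_n^0/(2n^d)$ should then follow either from a subadditivity argument on $\t{tr}\log$ under concatenation of boxes, or from a direct Fourier computation combined with a proof that the axial boundary contributes only $o(n^d)$ to $\log\det\Sigma_n^0$. The main obstacles are twofold: (i) controlling the cubic and higher remainder in the small-$g$ expansion uniformly in $n$, which requires quantitative estimates against a covariance whose smallest eigenvalue may decay with $n$; and (ii) extracting the bulk limit of $-\t{tr}\log\Sigma_n^0/(2n^d)$ despite the asymmetric axial boundary, a technical point that motivates the present paper's reformulation with more convenient boundary conditions.
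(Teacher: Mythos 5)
First, a point of orientation: the present paper does not prove Theorem \ref{thm:Cha} at all --- it is imported verbatim from \cite[Theorem 2.1]{Cha1}, and the text only offers a heuristic interpretation of the three terms in \eqref{eq:freee}. Your sketch reproduces exactly that heuristic (gauge fixing along the spanning tree $T_n$, exponential coordinates $U_e=\exp(igX_e)$, the Jacobian of Haar measure via Macdonald's volume formula, and the Gaussian integral producing $\frac{|E_n^1|}{2n^d}N^2\log g^2$ and $-\frac{N^2}{2n^d}\log\det\Sigma_n^0$), and your bookkeeping of the constant $-\log \t{vol}(\t{U(N)})+\frac{N^2}{2}\log(2\pi)=\log\frac{\prod_{j=1}^{N-1}j!}{(2\pi)^{N/2}}$ per edge, times $|E_n^1|/n^d\to d-1$, is correct. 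So as an outline of the strategy of \cite{Cha1} this is faithful.

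As a proof, however, it has a genuine gap, and it sits precisely at the two points you label ``obstacles'': they are not technical afterthoughts but the actual content of \cite{Cha1} (which runs to many sections). Concretely, the device of cutting at $\|X\|\leq g^{-1+\delta}$ and invoking strict positivity of $\Sigma_n^0$ does not close the argument in the joint limit $n\to\infty$, $g\to 0$ with the two parameters varying independently. The reason is quantitative: the smallest eigenvalue of $\Sigma_n^0$ is only bounded below by $C n^{-(d+2)}$ (this is \cite[Lemma 13.1]{Cha1}, quoted as \eqref{eq:gap1} in the present paper), so the effective Gaussian $\exp(-\frac12\Sigma_n^0(X,X))$ has modes of variance as large as $n^{d+2}$. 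Typical configurations in the ``bulk'' region therefore need not satisfy $U_p\approx \tbf 1_{\bC^N}$ pointwise unless one assumes a relation like $g\,n^{(d+2)/2}\to0$, which the theorem does not; and the naive bound $O(g)\cdot P(X)$ on the cubic remainder, evaluated on the ball of radius $g^{-1+\delta}$, is of order $g^{-2+3\delta}$ per plaquette rather than $o(1)$ per site. Making the Gaussian approximation rigorous uniformly in $(n,g)$ requires the much more delicate comparison arguments of \cite{Cha1}. Similarly, the existence of the limit $K_d$ is not immediate from ``translation invariance in the bulk'': the axial-gauge constraint destroys translation invariance globally (not just near $\partial\Lambda_n$), and a subadditivity argument for $\t{tr}\log\Sigma_n^0$ under concatenation of boxes would itself need the spectral lower bound to control the near-kernel. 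Indeed, the present paper's entire contribution (Propositions \ref{prop:removeRd} and \ref{prop:Kdfinal}) is devoted to showing that, \emph{given} the gap estimate \eqref{eq:gap1}, one may trade the axial boundary conditions for periodic ones at the cost of $O(n^{d-1}\log n)$ eigenvalue errors --- which is an alternative route to your point (ii), but one that still takes \eqref{eq:gap1} as external input. So your proposal is a correct plan with the right constants, but the two steps you defer are exactly where a proof would have to do its work.
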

%%%%%%%%%%%%%%%%%%%%%%%%%%%%%%
%%%%%%%%%%%%%%%%%%%%%%%%%%%%%%
%\vspace{0.1cm}
%\noindent \textbf{Remarks:} 
%\begin{enumerate}[1)]
%\item 
%\item 
%\end{enumerate}

The proof of Theorem \ref{thm:Cha} in \cite{Cha1} combines various tools from probability theory, lattice gauge theory and statistical mechanics. The interpretation of the terms on the right hand side in \eqref{eq:freee} is as follows. Locally, one can change integration over $|E_n^1|$ copies of $G$ with regards to Haar measure to integration over $|E_n^1|$ copies of $\frak{g}$ against a measure that is absolutely continuous with regards to Lebesgue measure. Its Radon-Nikodym derivative is taken into account by the second term on the right hand side in \eqref{eq:freee}, which is related to the volume of balls in $\text{U(N)}$ (see \cite[Section 6]{Cha1} for the details). What remains are, to leading order, $ \t{dim}(\frak{g}) = N^2$ independent Gaussian fields which are distributed according to lattice Maxwell theory, rescaled by the factor $-1/g^2$. A change of variables in $ \bR^{|E_n^1|N^2}$ then yields the first term on the right hand side and we are left with the free energy $ N^2 K_d$ of $N^2$ independent Gaussian fields with covariance matrix $\Sigma_n^{0}$. In particular, as pointed out in \cite[Section 2]{Cha1}, it is interesting to note that the constant $K_d$ in \eqref{eq:Kd} depends only on the dimension $d\geq 2$, but not on the specific gauge group $G$ and it was left open whether it can be determined explicitly. The purpose of this note is to address this question and to provide a formula for $K_d$. 

Let $ E_n, E_n^0$ and $E_n^1$ be defined as above and let $u\in\bR^{E_n} $. In the sequel, it is useful to identify $u\in\bR^{E_n} $ with a discrete one form. To this end, for simplicity of notation, it turns out convenient to embed $\bR^{E_n}\hookrightarrow \bR^{E_{\bZ^d}}$ by setting $u_e =0$ whenever $ e\in E_{\bZ^d}\setminus E_n$, a convention we adopt from now on. Notice that this implies in particular that $ u_{(x,x+e_d)} =0 $ for all $u\in\bR^{E_n^1}$ and $x\in \Lambda_n$, because either $(x,x+e_d)\in E_n^0$ or $(x,x+e_d)\not \in E_n$, for every $x\in \Lambda_n$. As a consequence, the map 
		\be\label{eq:defiso} \bR^{E_n} \ni  u= (u_{e})_{e\in E_n} \mapsto  w^{(u)} =\big(w_i^{(u)}\big)_{i=1}^{d}, \; \big(w_i^{(u)}\big)(x) = u_{(x,x+e_i)}, \forall x\in\Lambda_n, i\in [d], \ee
defines a linear isomorphism between $\bR^{E_n^1}\subset \bR^{E_n}$ and the space 
		\be \label{eq:space} \Omega_{n}^{1,\t{a}} = \Omega_{\Lambda_n}^{1,\t{a}} = \Big\{  w= (w_i)_{i=1}^{d} \in (\bR^{\Lambda_n})^{d}:   (w_i)_{|V_{n,i}^0} = 0,\,\forall  i\in [d]\Big\}, \ee
where
		\be \label{eq:bndcond}V_{n,i}^0= V_{\Lambda_n,i}^0 = \big\{x\in \Lambda_n: (x,x+e_i)\in E_n^0 \big\}\cup \big\{x\in \partial \Lambda_n: (x, x+e_i)\not \in  E_n \big\}. \ee
An element $ w= (w_i)_{i=1}^{d}\in \Omega_{n}^{1,\t{a}}$ corresponds to a discrete connection form on $\Lambda_n$ in the so called axial gauge. In the continuum, the corresponding gauge constraints for a connection one form $ A = \sum_{i=1}^d A_i \,\text{d}x_i \in \Omega^1(M, \mathfrak{g})$ on a base manifold $M$ read $ A_1(x_2 =x_3=\ldots=x_d=0)=0, A_2(x_3=\ldots=x_d=0)=0, \ldots, A_{d-1}( x_d=0)=0, A_d =0$, see e.g.\ \cite[Chapter 22]{GlimmJaffe}. In the discrete setting, the boundary condition on the $i$-th component, for $i\in[d]$, is encoded by $ (w_i)_{|V_{n,i}^0} = 0$, where $V_{n,i}^0$ is defined in \eqref{eq:bndcond}.

Since $w_d=0$ for $w\in \Omega_{n}^{1,\t{a}}$, we identify $\Omega_n^{1,\t a}$ in the sequel with a subspace
		\[\Omega_n^{1,\t a}\subset \bigoplus_{j=1}^{d-1} \ell^2(\Lambda_n) \hookrightarrow \bigoplus_{j=1}^{d} \ell^2(\bZ^d). \]
Here, $w\in \bigoplus_{j=1}^{d-1} \ell^2(\Lambda_n)$ is canonically identified with $w\in \bigoplus_{j=1}^{d} \ell^2(\bZ^d)$ by setting $w_{|\bZ^d \setminus \Lambda_n}=0$ and $w_d=0\in \ell^2(\bZ^d)$. The inner product in (real) $ \bigoplus_{j=1}^{d} \ell^2(\bZ^d)$ is 
		\[\langle w, w'\rangle =\sum_{i=1}^d \langle w_i, w_i' \rangle = \sum_{i=1}^d \sum_{x\in\bZ^d} w_i(x)w_i'(x), \;\forall w,w'\in \bigoplus_{j=1}^{d} \ell^2(\bZ^d). \]
The key observation that enables an explicit computation of $K_d$ from Theorem \ref{thm:Cha} is to identify the covariance $\Sigma_n$, defined in \eqref{eq:defSigma}, with a suitable lattice differential operator on $\bigoplus_{j=1}^{d} \ell^2(\bZ^d)$, restricted to $\Omega_{n}^{1,\t{a}}$. Such an identification was used quite recently in the massive setting (i.e.\ with $\Sigma_n$ replaced by $\Sigma_n + \varepsilon \tbf 1_{\bR^{E_n}}$ for $\varepsilon>0$) in \cite[Section 4.3]{Cha3} to derive the massive Proca field from $\t{SU}(2)$ lattice Yang-Mills-Higgs theory. The form of the differential operator is dictated by the perturbative emergence of lattice Maxwell theory from lattice Yang-Mills theory (see e.g.\ \cite[Chapter 22]{GlimmJaffe}). For the statement of our main result, define the operator $Q_d: \bigoplus_{j=1}^{d} \ell^2(\bZ^d)\to \bigoplus_{j=1}^{d} \ell^2(\bZ^d)$ by
		\be\label{eq:defQd} \big(Q_d w\big)_i =   -\Delta w_i - \sum_{j=1}^{d}  \partial_i \partial_j^* w_j , \;\forall i\in [d].   \ee
Then, $Q_d$ defines a non-negative, symmetric quadratic form in $ \bigoplus_{j=1}^{d} \ell^2(\bZ^d)$ such that
		\be\label{eq:defQdform} \langle w, Q_d w\rangle = \frac12 \sum_{i,j=1}^d \big\| \partial_i w_j-\partial_j w_i\big\|^2,\;\forall w\in  \bigoplus_{j=1}^{d} \ell^2(\bZ^d).\ee
Here, the operators $\partial_i, \partial_i^*$ and $ -\Delta = \sum_{i=1}^d \partial_i^*\partial_i$ refer to the standard lattice derivatives and, respectively, Laplacian. We recall their definitions in Section \ref{sec:LMT} below. For completeness, the simple proof of \eqref{eq:defQdform} is explained in Lemma \ref{lm:sigman2} below. By the same formula \eqref{eq:defQdform}, note that $Q_d = \text{d}^* \text{d}$ in terms of the discrete lattice exterior differential $\text{d}$ and its adjoint $\text{d}^*$. Given the above notation, the main result of this paper is as follows.

\begin{theorem}\label{thm:main} Let $d\geq 2$, let $Q_d: \bigoplus_{j=1}^{d} \ell^2(\bZ^d)\to \bigoplus_{j=1}^{d} \ell^2(\bZ^d)$ be as in \eqref{eq:defQd} and denote by $ \Pi_{\Omega_{n}^{1,\emph{a}}} $ the orthogonal projection of $\bigoplus_{j=1}^{d} \ell^2(\bZ^d)$ onto $\Omega_{n}^{1,\emph{a}}$. Then, we have that 
		\be \label{eq:Kdlattice} K_d =\lim_{n\to\infty} -\frac1{2n^d}\emph{tr}\big( \log \Pi_{\Omega_{n}^{1,\emph{a}}} Q_d\, \Pi_{\Omega_{n}^{1,\emph{a}}}\big) , \ee
where $\emph{tr}= \emph{tr}_{|\Omega_{n}^{1,\emph{a}}}$ denotes the trace over $ \Omega_{n}^{1,\emph{a}}$, and $K_d$ is given by the explicit formula
		\be\label{eq:K234}\begin{split}
		K_d %&= -\frac{d-1}2 \log 2- \frac{1}2 \int_0^1  \emph{d}x\,  \log  \big(1-\cos(2\pi x)\big)\\
		&=- \frac{d-2}2 \int_{[0,1]^d} \emph{d}x_1\ldots \emph{d}x_d\,  \log \sum_{k=1}^d 2\big(1-\cos(2\pi x_k)\big).
		\end{split}\ee
\end{theorem}

\noindent \textbf{Remarks:} 
\begin{enumerate}[1)]
\item  Notice that the constant 
		$$- \frac{1}2 \int_{[0,1]^d} \emph{d}x_1\ldots \emph{d}x_d\,  \log \sum_{k=1}^d 2\big(1-\cos(2\pi x_k)\big)$$ 
equals precisely the free energy of a scalar Gaussian free field (GFF) on $\bT^d$, interpreted as an appropriate limiting free energy of GFFs on the lattice. Starting from another common gauge, the so called Coulomb gauge $ \text{d}^* w = 0$ so that $Q_d w = ( \text{d}^*\text{d} + \text{d}\text{d}^*)w = (-\Delta)w$, it may at first seem surprising that $K_d$ is equal to the free energy of $d-2$ independent GFFs, instead of $d-1$ independent copies. This difference is due to the axial gauge and somewhat hidden in \eqref{eq:K234}. Indeed, the intuition behind \eqref{eq:K234} is as follows. The kernel of $Q_d$ is essentially given by the space of gradients. The axial gauge ensures the positivity of $Q_d$ and sets the $d$-th field component $w_d=0$ equal to zero so that, in the axial gauge, $Q_d$ acts on gradients like $\partial_d^*\partial_d$. The free energy of a Gaussian field on $\Lambda_n$ with this covariance equals 
		\[ - \frac{1}2 \int_0^1  \emph{d}x\,  \log 2 \big(1-\cos(2\pi x)\big) = 0.\]
For the $d-2$ remaining orthogonal field components, on the other hand, $Q_d$ acts like the free Laplacian $-\Delta$. Splitting the field into a gradient and $d-2$ orthogonal components, we thus arrive at \eqref{eq:K234}. 

\item To make the heuristic argument from the previous remark rigorous, we compare the spectrum of $\Pi_{\Omega_{n}^{1,\t{a}}}Q_d\, \Pi_{\Omega_{n}^{1,\t{a}}}$ with the spectrum of the related operator 
 		\[\Pi_{\Omega_{n}^{1,\t{p,+}}}Q_d^{\t{per}}\, \Pi_{\Omega_{n}^{1,\t{p,+}}},\] 
which corresponds to $Q_d$ with periodic boundary conditions above its zero ground state energy. Loosely speaking, our guiding principle is that $K_d$ equals a limiting free energy density which often does not depend on the specific boundary conditions and that $Q_d$ can be diagonalized explicitly in the periodic setting.  
\end{enumerate}

In Section \ref{sec:LMT}, we provide an elementary and self-contained proof of the reduction to the periodic case and we outline the computation that leads to \eqref{eq:K234}. A key ingredient from \cite{Cha1} that is used in our proof is that the axial gauge ensures the positivity of the lowest eigenvalue of $\Pi_{\Omega_{n}^{1,\t{a}}}Q_d\, \Pi_{\Omega_{n}^{1,\t{a}}}$. As a side remark, notice that given this input, our derivation of \eqref{eq:K234} provides an alternative existence proof of $K_d$ (cf.\ \cite[Section 15]{Cha1}).  

%%%%%%%%%%%%%%%%%%%%%%%%%%%%%%%%%%%%%%
%%%%%%%%%%%%%%%%%%%%%%%%%%%%%%%%%%%%%%
%%%%%%%%%%%%%%%%%%%%%%%%%%%%%%%%%%%%%%
\section{Free Energy of Lattice Maxwell Theory}\label{sec:LMT}

In this section, we provide a detailed proof of Theorem \ref{thm:main}. We proceed in several steps. The first lemma identifies $\Sigma_n$, defined in \eqref{eq:defSigma}, with a suitable restriction of the operator $Q_d$ from Theorem \ref{thm:main}. Before stating this precisely, let us recall that  
		\be\label{eq:dd*}\begin{split} 
		(\partial_i \phi)(x) &= \phi(x+e_i)-\phi(x),  \hspace{0.5cm} (\partial_i^* \phi)(x) = \phi(x-e_i)-\phi(x)
		\end{split}\ee
for all $ \phi \in \ell^2(\bZ^d)$, $x\in \bZ^d$ and $i\in [d]$. Moreover, we recall that the discrete Laplacian $-\Delta$ in $\bZ^d$ is defined by 
		\[-\Delta = \sum_{i=1}^d \partial_i^*\partial_i = \sum_{i=1}^d \partial_i\partial_i^*\]
so that 
		\[ (-\Delta \phi)(x)  =  2d \phi(x) - \sum_{i=1}^d  \phi(x-e_i) - \sum_{i=1}^d\phi(x+e_i) \] 
for all $ \phi \in \ell^2(\bZ^d)$ and $x\in \bZ^d$. Notice that $[\partial_i, \partial_j^*] = [\partial_i, \partial_j]=0$ for all $i,j\in [d]$ and that $\langle \phi, (-\Delta) \phi\rangle = \|\nabla\phi\|^2\geq 0$ as well as $\langle \phi, \partial_i \psi\rangle = \langle \partial_i^*\phi,  \psi\rangle $ for all $\phi,\psi \in \ell^2(\bZ^d)$. Here, we set $[A,B] = AB-BA$ for linear operators $A,B: \ell^2(\bZ^d)\to \ell^2(\bZ^d)$ and $\nabla\phi = (\partial_1\phi,\ldots,\partial_d\phi)\in \bigoplus_{j=1}^d \ell^2(\bZ^d)$ denotes the discrete gradient of $\phi \in \ell^2(\bZ^d)$.

In the first lemma, we compute the matrix representation $\Sigma_n \in\bR^{|E_n|\times |E_n|}$ of the quadratic form $\Sigma_n$, defined in \eqref{eq:defSigma}, explicitly. Up to a few modifications related to the boundary edges in $\Lambda_n$, we argue as in the proof of \cite[Theorem 4.6]{Cha3}. Following the latter, let us recall the following terminology. Let $p=(x,j,k)\in P_{n}$ be a plaquette, for some $1\leq j<k\leq d$, so that $x\in\Lambda_n$ is its smallest vertex in lexicographic ordering. Let us abbreviate its edges by $e^{(1)}= (x,x+e_j), e^{(2)}=(x+e_j, x+e_j+e_k), e^{(3)} = (x+e_k, x+e_j+e_k), e^{(4)}=(x,x+e_k)\in E_n $ so that $e^{(2)} $ is the edge that touches $e^{(1)} $ at its right endpoint $x+e_j\in\Lambda_n$ while $e^{(4)} $ is the edge that touches $e^{(1)}$ at its left endpoint $x\in\Lambda_n$. We say that the edges $e^{(1)}, e^{(2)},e^{(3)},e^{(4)}$ of $p$ are neighbors and according to the signs in the plaquette's weight $u_p$ defined in \eqref{eq:defap}, i.e.
		\[u_p = u_{e^{(1)}}+u_{e^{(2)}}-u_{e^{(3)}}-u_{e^{(4)}},\] 
we say that the edges $e^{(j)}, e^{(k)}$ are positive neighbors if $\{j,k\}=\{1,2\}$ or $\{j,k\}=\{3,4\}$. In all other cases, we say that $e^{(j)}, e^{(k)}$ are negative neighbors. Given an edge $e\in E_n$, we denote the set of all its positive neighbors by $N_+(e)\subset E_n$ and the set of all of its negative neighbors by $N_-(e)\subset E_n$. 

In order to compute the matrix representation of $\Sigma_n$ from \eqref{eq:defSigma} explicitly, it turns out useful to decompose the set of edges $E_n$ as follows. First, viewing $\partial \Lambda_n = \partial [0,n]^d\cap \bZ^d$ as a subgraph of $\Lambda_n$, we split the set of edges  
		\[ E_n = E_{\Lambda_n} = E_{\Lambda_n}^\circ \cup E_{\partial\Lambda_n}\]
disjointly into its boundary edges $E_{\partial\Lambda_n} = \{e=(x,y)\in E_n: x,y\in\partial \Lambda_n\} $ and its complement $E_{\Lambda_n}^\circ = E_n\setminus E_{\partial\Lambda_n}$, whose elements we call interior edges of $\Lambda_n$.

In the next step, we repeat the decomposition into boundary and interior edges with $E_{\Lambda_n}$ replaced by $E_{\partial\Lambda_n}$. To this end, notice that 
		\[ \partial\Lambda_n = \bigcup_{j=1}^{2d} F_{\Lambda_n,j}  \]
is equal to the union over the $2d$ faces given by   
		\[\begin{split}
		F_{\Lambda_n,j} &= \big\{ x=(x_1,\ldots,x_d)\in\Lambda_n: x_j=0 \big\},\; F_{\Lambda_n,j+d} = \big\{ x=(x_1,\ldots,x_d)\in\Lambda_n: x_{j}=n \big\}, 
		\end{split} \]
for $ j\in[d]$. Each face $F_{\Lambda_n,k}$ is isomorphic to $[0,n]^{d-1}\cap \bZ^{d-1}$. Accordingly, we split the edges of $ F_{\Lambda_n,k} $ (identified with $[0,n]^{d-1}\cap \bZ^{d-1}$) disjointly into its interior and boundary edges, similarly as in the first step. The union over $j\in [2d]$ of all resulting interior face edges is denoted by $ E_{\partial\Lambda_n}^\circ $ and the union over all the boundary face edges is denoted by $E_{\partial^2\Lambda_n}$. Hence
		\[ E_{\partial\Lambda_n} = E_{\partial\Lambda_n}^\circ\cup E_{\partial^2\Lambda_n}.\]
Iterating this procedure, we conclude that $E_n$ splits into the disjoint union
		\be \label{eq:bndintedges}E_n = E_{\Lambda_n} = \bigcup_{j=0}^{d-1} E_{\partial^j\Lambda_n}^\circ,  \ee
where we set $ E_{\partial^0\Lambda_n}^\circ = E_{\Lambda_n}^\circ$, where $ \partial^k \Lambda_n$ is a union of $(d-k)$-dimensional faces contained in $\partial^{k-1}\Lambda_n$ each of which is isomorphic to $ [0,n]^{d-k}\cap \bZ^{d-k}$ (notice that $ \partial^k \Lambda_n$ consists of all points $x\in\Lambda_n$ for which $k$ coordinates are fixed and equal to $0$ or $n$) and where $E_{\partial^{d-1}\Lambda_n}^\circ = E_{\partial^{d-1}\Lambda_n}$, as $\partial^d\Lambda_n= \{0,n\}^d$ equals the set of corners of $\Lambda_n$, so that $ E_{\partial^{d}\Lambda_n}=\emptyset$.   

%%%%%%%%%%%%%%%%%%%%%
\begin{lemma}\label{lm:sigman1} Let $ \Sigma_n \in\bR^{|E_n|\times |E_n|}$ denote the symmetric matrix associated with the quadratic form defined in \eqref{eq:defSigma}. Denote its matrix entries by $(\Sigma_n)_{ee'} $ for $e,e'\in E_n$. Then
		\be\label{eq:Sigmanrep}  (\Sigma_n)_{ee'} = \begin{cases} 2(d-1) - k & \text{if } e=e', e\in (\partial^kE_{\Lambda_n})^\circ \t{ and } 0\leq k\leq d-1, \\  1 & \text{if $e,e'$ are positive neighbors}, \\ -1 & \text{if $e,e'$ are negative neighbors}, \\ 0 & \text{else.}
		\end{cases}\ee
\end{lemma}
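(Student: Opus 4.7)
The plan is to evaluate the matrix entries of $\Sigma_n$ in the standard basis by direct polarization of \eqref{eq:defSigma}. Writing each plaquette weight \eqref{eq:defap} as $u_p = \sum_{e \in E_n} \sigma_p(e)\, u_e$, with $\sigma_p(e) \in \{+1, -1\}$ prescribed by the signs in \eqref{eq:defap} when $e$ is one of the four sides of $p$ and $\sigma_p(e) = 0$ otherwise, bilinearity yields
\[
(\Sigma_n)_{e, e'} \;=\; \sum_{p \in P_n} \sigma_p(e)\, \sigma_p(e').
\]
It then suffices to treat the diagonal and off-diagonal cases separately.

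For the off-diagonal entries $e \neq e'$, the key geometric observation is that two distinct edges of $\Lambda_n$ are contained in at most one common plaquette: edges meeting at a vertex in orthogonal directions fix the $2$-plane they span, and parallel edges at unit separation fix the square they bound. If no plaquette in $P_n$ contains both $e$ and $e'$, the entry vanishes; otherwise, direct inspection of the signs in \eqref{eq:defap} (as recalled in the paragraph preceding the lemma) shows that $\sigma_p(e)\, \sigma_p(e') = +1$ precisely when $e, e'$ are positive neighbors and $-1$ when they are negative neighbors, in agreement with \eqref{eq:Sigmanrep}.

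For the diagonal entries, $\sigma_p(e)^2 = \mathbf{1}_{\{e \in p\}}$, so $(\Sigma_n)_{e,e}$ equals the number of plaquettes in $P_n$ containing $e$. Writing $e = (x, x+e_i)$, I would first establish that $e \in E_{\partial^k \Lambda_n}^\circ$ iff exactly $k$ of the coordinates $x_j$ with $j \in [d] \setminus \{i\}$ lie in $\{0, n\}$, while the remaining $d - 1 - k$ lie in $\{1, \ldots, n-1\}$. This follows by unwinding the recursive definition \eqref{eq:bndintedges} together with the observation that $x$ and $x + e_i$ share all coordinates outside direction $i$, so the depth of the stratum containing $e$ is controlled purely by the perpendicular coordinates. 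Given this, the plaquettes through $e$ are parametrized by pairs $(j, \pm)$ with $j \in [d] \setminus \{i\}$ and a sign: the plaquette extending from $e$ in direction $\pm e_j$ lies in $\Lambda_n$ iff $x_j \pm 1 \in [0, n]$, contributing one admissible plaquette when $x_j \in \{0, n\}$ and two when $x_j \in \{1, \ldots, n-1\}$. Summing over the $d-1$ directions $j \neq i$ gives $k \cdot 1 + (d - 1 - k) \cdot 2 = 2(d-1) - k$, as claimed.

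The main technical subtlety is the coordinate characterization of $E_{\partial^k \Lambda_n}^\circ$ used above, since an edge in $E_{\partial^k \Lambda_n}^\circ$ may have one endpoint lying even deeper in $\partial^{k+1} \Lambda_n$; one has to verify that the relevant count of ``fixed'' perpendicular directions nevertheless equals exactly $k$. Once this is settled, the remaining polarization and uniqueness-of-plaquette arguments are essentially routine.
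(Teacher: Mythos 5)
Your proposal is correct and follows essentially the same route as the paper: polarize the quadratic form, use that two distinct edges lie in at most one common plaquette to read off the $\pm 1$ off-diagonal entries, and count the plaquettes through an edge for the diagonal. The only difference is organizational: where the paper counts interior edges first and then inducts over the boundary strata (adding $k$ transverse plaquettes at depth $k$), you give a single uniform count parametrized by $(j,\pm)$ via the coordinate characterization of $E_{\partial^k\Lambda_n}^\circ$, which you correctly identify as the one point needing verification.
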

%%%%%%%%%%%%%%%%%%%%%
\begin{proof} 
Using the shorthand notation $e^{(1)}= (x,x+e_j), e^{(2)}=(x+e_j, x+e_j+e_k), e^{(3)} = (x+e_k, x+e_j+e_k), e^{(4)}=(x,x+e_k)\in E_n $ for a plaquette $p=(x,j,k)$, $1\leq j<k\leq d$, as introduced above, we have that 
		\[ \Sigma_n(u,v)=\sum_{p=(x,j,k)\in P_n} u_pv_p, \]
where
		\be\label{eq:sigman1}\begin{split}
		u_pv_p&=   u_{e^{(1)}}v_{e^{(1)}} +u_{e^{(2)}}v_{e^{(2)}}+u_{e^{(3)}}v_{e^{(3)}}+u_{e^{(4)}}v_{e^{(4)}} + u_{e^{(1)}}v_{e^{(2)}}+ u_{e^{(2)}}v_{e^{(1)}}\\
		& \hspace{0.4cm}+u_{e^{(3)}}v_{e^{(4)}} +u_{e^{(4)}}v_{e^{(3)}}-u_{e^{(1)}}v_{e^{(3)}}-u_{e^{(3)}}v_{e^{(1)}}  -u_{e^{(1)}}v_{e^{(4)}}-u_{e^{(4)}}v_{e^{(1)}}\\
		&\hspace{0.4cm}- u_{e^{(2)}}v_{e^{(3)}}-u_{e^{(3)}}v_{e^{(2)}} -u_{e^{(2)}}v_{e^{(4)}}-u_{e^{(4)}}v_{e^{(2)}}.  
		\end{split}\ee
The identity \eqref{eq:sigman1} implies that $\Sigma_n(u,v)$ consists of a sum of terms of the form $u_e v_e, u_{e}v_{e'}$ and $-u_{e}v_{e'}$, for edges $e, e'\in E_n$ such that $e,e'$ are neighbors. Comparing this with 
		\be\label{eq:sigman2}\Sigma_n(u,v) = \langle u, \Sigma_n  v\rangle = \sum_{e,e'\in E_n} (\Sigma_n)_{ee'} u_ev_{e'},\ee
it implies that $(\Sigma_n)_{ee'}=0$ unless $e, e'\in E_n$ are neighbors. If $e,e'\in E_n$ are neighbors, on the other hand, they must be two edges of the same plaquette and there exists at most one such plaquette (for two distinct plaquettes share at most one edge). In this case, the sum in \eqref{eq:sigman2} contains exactly the two terms $(\Sigma_n)_{ee'}u_e v_{e'}$ and $(\Sigma_n)_{e'e}u_{e'} v_e$ in which $e,e'$ are paired.  
If $e,e'\in E_n$ are positive neighbors, we thus conclude from \eqref{eq:sigman1} that $(\Sigma_n)_{ee'}=(\Sigma_n)_{e'e}=1$ and if they are negative neighbors then $(\Sigma_n)_{ee'}=(\Sigma_n)_{e'e}=-1$.

It remains to determine the diagonal entries $(\Sigma_n)_{ee}$ of $\Sigma_n$, for all $e\in E_n$. According to \eqref{eq:sigman1}, this amounts to counting how many plaquettes $p\in P_n$ contain a given edge $e\in E_n$. Here, we consider several cases according to the decomposition of $E_n$ in \eqref{eq:bndintedges}. Assume first that $e=(x,y) \in E_{\Lambda_n}^\circ$ so that either $x\not\in\partial\Lambda_n$ or $y\not\in\partial\Lambda_n$. Then, if the edges of a plaquette $p=(x,j,k)\in P_n$, for some $1\leq j<k\leq d$, are denoted by $e^{(1)}, e^{(2)},e^{(3)}$ and $e^{(4)}$ as above, $e \in E_{\Lambda_n}^\circ$ can be equal to each one of them. Clearly, writing $e =(x,x+e_j)$ for some $1\leq j\leq d$, we find $d-j$ plaquettes that contain $e=e^{(1)}$, which are uniquely determined by $e$ and $(x,x+e_k)$ for $j<k\leq d$, and $j-1$ plaquettes with $e=e^{(4)}$, which are uniquely determined by $e$ and $(x,x+e_k)$ for $1\leq k<j$. Similarly, we conclude that there exist $j-1$ plaquettes for which $e=e^{(2)} $ and $ d-j$ plaquettes for which $e=e^{(3)} $. In summary, this shows that $ (\Sigma_n)_{ee}=2(d-1)$ for all $e\in E_{\Lambda_n}^\circ$. 
%%% explain why e can be an edge of each type in additional notes, based on the fact that e=(x,y)

As a consequence of the previous step, we claim that  $ (\Sigma_n)_{ee}=2(d-1)-k$ for all $e\in E_{\partial^{k}\Lambda_n}^\circ$ and for all $0\leq k\leq d-1$. Indeed, if $e\in  E_{\partial^k\Lambda_n}^\circ$, it is an interior edge of a face isomorphic to $[0,n]^{d-k}\cap \bZ^{d-k}$. By the first step, there are exactly $ 2(d-k-1)$ plaquettes whose edges lie entirely in $\partial^k\Lambda_n\subset \Lambda_n$. To determine the remaining plaquettes that contain $e$, we may assume without loss of generality that $e=(x,y)$ for two adjacent vertices $x=(0,\ldots,0,x_{k+1} \ldots,x_d) \prec y=(0,\ldots,0,y_{k+1}, \ldots,y_d)\in \Lambda_n$. Then, there are exactly $k$ additional plaquettes $p_i\in P_n$ that contain $e$ and whose edges are equal to $ e=(x,y), (y,y+e_i), (y+e_i, x+e_i), (x,x+e_i)\in E_n$, for $1\leq i\leq k$. Thus, $ (\Sigma_n)_{ee}=2(d-k-1)+k =2(d-1)-k$ for all $e\in E_{\partial^{k}\Lambda_n}^\circ$ and $0\leq k\leq d-1$. We thus find that
		\[\begin{split}
		\Sigma_n(u,v) = \sum_{e,e'\in E_n: e\neq e'} \!\!(\Sigma_n)_{ee'}u_e v_{e'} + \sum_{k=0}^{d-1}\sum_{e\in E^\circ_{\partial^k\Lambda_n} } \!\!\! |\{ p\in P_n: e \text{ is edge of } p\}|\,u_ev_e =\langle u,\Sigma_n v\rangle	
		\end{split}\]
for all $u, v\in \bR^{E_n}$ and for $ \Sigma_n \in\bR^{|E_n|\times |E_n|}$ as defined in \eqref{eq:Sigmanrep}. 
\end{proof}

In the next step, we use Lemma \ref{lm:sigman1} to identify $\Sigma_n$ with the lattice differential operator $Q_d$ on $\bigoplus_{j=1}^{d} \ell^2(\bZ^d)$, defined in \eqref{eq:defQd}.

%%%%%%%%%%%%%%%%%%%%%
%%%%%%%%%%%%%%%%%%%%%
\begin{lemma}\label{lm:sigman2} Let $Q_d:\bigoplus_{j=1}^d\ell^2(\bZ^d)\to \bigoplus_{j=1}^d\ell^2(\bZ^d)$ be defined as in \eqref{eq:defQd}. Moreover, define the operator $R_d:\bigoplus_{j=1}^d\ell^2(\bZ^d)\to \bigoplus_{j=1}^d\ell^2(\bZ^d)$ by 
			\[ (R_d w)_i(x) = \begin{cases} k w_i(x) & \t{if } (x,x+e_i) \in E_{\partial^k \Lambda_n}^\circ,\, 1\leq k\leq d-1, \\ 0 & \t{else.} \end{cases} \]
Then, for every $u\in \bR^{E_n}$, we have that
		\[ \Sigma_n(u,u) =  \big\langle w^{(u)}, Q_d w^{(u)} \big\rangle - \big\langle w^{(u)}, R_d w^{(u)}\big\rangle,\]
where $w^{(u)}\in \bigoplus_{j=1}^d \ell^2(\Lambda_n)\hookrightarrow \bigoplus_{j=1}^d \ell^2(\Lambda_n)$ is defined as in \eqref{eq:defiso}. Moreover, we have that
		\[ \langle w, Q_dw\rangle =\frac12 \sum_{i,j=1}^d \big\| \partial_i w_j-\partial_j w_i\big\|^2, \;\forall w\in \bigoplus_{j=1}^d \ell^2(\bZ^d). \] 
\end{lemma}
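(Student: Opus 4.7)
I will first establish the quadratic form identity (the second claim of the lemma), which is purely algebraic on $\bigoplus_{j=1}^d \ell^2(\bZ^d)$. Because $[\partial_i, \partial_j^*]=0$, the operator rewrites as $(Q_d w)_i = \sum_j \partial_j^*(\partial_j w_i - \partial_i w_j)$. Taking inner products and using that $\partial_j^*$ is the adjoint of $\partial_j$ gives
\[ \langle w, Q_d w\rangle = \sum_{i,j=1}^d \langle \partial_j w_i, \partial_j w_i - \partial_i w_j\rangle. \]
Decomposing $\partial_j w_i = \tfrac12(\partial_j w_i - \partial_i w_j) + \tfrac12(\partial_j w_i + \partial_i w_j)$ and noting that the symmetric part pairs to zero against the antisymmetric piece after summation over $i,j$, I obtain $\langle w, Q_d w\rangle = \tfrac12\sum_{i,j}\|\partial_i w_j - \partial_j w_i\|^2$, which is nonnegative.

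For the identity $\Sigma_n(u,u)=\langle w^{(u)}, Q_d w^{(u)}\rangle - \langle w^{(u)}, R_d w^{(u)}\rangle$, my plan is to compute both sides as quadratic forms in $u \in \bR^{E_n}$ and match their matrix entries using Lemma \ref{lm:sigman1}. Applying the identity just proved gives
\[ \langle w^{(u)}, Q_d w^{(u)}\rangle = \sum_{i<j}\sum_{x\in\bZ^d} T_{i,j,x}(u)^2,\]
where
\[ T_{i,j,x}(u) = u_{(x,x+e_i)} + u_{(x+e_i, x+e_i+e_j)} - u_{(x+e_j, x+e_i+e_j)} - u_{(x,x+e_j)} \]
(with $u_e := 0$ for $e\notin E_n$) coincides with the plaquette weight $u_p$ whenever $p=(x,i,j) \in P_n$. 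Expanding each square produces (a) diagonal contributions $u_e^2$ and (b) cross terms $\pm 2 u_e u_{e'}$ for pairs of distinct edges appearing jointly in some $T_{i,j,x}$.

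For the cross terms, I will verify by a short coordinate check that if two distinct edges $e \neq e'$ both lie in $E_n$ and appear together in some $T_{i,j,x}$, then all four vertices of the corresponding plaquette $(x,i,j)$ necessarily lie in $\Lambda_n$, so $p \in P_n$. Consequently the cross-term contribution to $\langle w^{(u)}, Q_d w^{(u)}\rangle$ is $\pm 2 u_e u_{e'}$ exactly when $e,e'$ are positive/negative neighbors of a plaquette in $P_n$, matching the cross contributions in $\Sigma_n(u,u)$ predicted by Lemma \ref{lm:sigman1}. For the diagonal $u_e^2$ coefficient with $e=(y,y+e_k) \in E_n$, I count: for each of the $d-1$ index pairs $(i,j)$, $i<j$, with $k \in \{i,j\}$, there are exactly two values of $x \in \bZ^d$ for which $e$ appears as one of the four edges of $T_{i,j,x}$, each contributing $+1$ to the coefficient of $u_e^2$. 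This yields a uniform diagonal coefficient $2(d-1)$ in $\langle w^{(u)}, Q_d w^{(u)}\rangle$, independent of the boundary depth of $e$. Since $\langle w^{(u)}, R_d w^{(u)}\rangle = \sum_{e\in E_n} k(e) u_e^2$, where $k(e)$ denotes the unique $k$ with $e \in E_{\partial^k \Lambda_n}^\circ$, subtraction produces the diagonal $2(d-1)-k(e)$, matching Lemma \ref{lm:sigman1} exactly.

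\textbf{Main obstacle.} The subtle point is that the uniform diagonal coefficient $2(d-1)$ on the $Q_d$ side holds even for boundary edges, because $T_{i,j,x}(u)$ remains nonzero for boundary values of $x$ where the plaquette $(x,i,j)$ sticks out of $\Lambda_n$: the other three edges of such a plaquette are absent from $E_n$ (hence their $u$-values vanish), but the surviving single edge still contributes $u_e^2$ upon squaring. The role of $R_d$ is precisely to absorb these $k(e)$ extra boundary contributions, recovering the genuine diagonal of $\Sigma_n$. Once this bookkeeping is carried out, the matrix-entry comparison closes the argument.
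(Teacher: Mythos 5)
Your proof is correct, and it takes a genuinely different route from the paper's. The paper works at the operator level: it uses the matrix representation of $\Sigma_n$ from Lemma \ref{lm:sigman1}, explicitly enumerates the positive and negative neighbor sets $N_\pm(e)$ of each edge, and verifies the pointwise identity $(\Sigma_n u)_e = (Q_d w^{(u)})_i(x) - (R_d w^{(u)})_i(x)$, with the form identity for $Q_d$ obtained at the end by taking adjoints. You instead work entirely at the level of quadratic forms: after establishing $\langle w, Q_d w\rangle = \tfrac12\sum_{i,j}\|\partial_i w_j-\partial_j w_i\|^2$ (your symmetric/antisymmetric splitting is a minor variant of the paper's expansion of the square), you recognize $\langle w^{(u)}, Q_d w^{(u)}\rangle$ as the sum of squared plaquette weights $T_{i,j,x}(u)^2$ over \emph{all} plaquettes of $\bZ^d$, while $\Sigma_n(u,u)=\sum_{p\in P_n}u_p^2$ sums only over $P_n$. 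The two key counting facts you invoke are both correct and easy to check: any two distinct edges of a plaquette that both lie in $E_n$ force all four plaquette vertices into $\Lambda_n$ (so the cross terms agree exactly), and every edge of $E_n$ lies in exactly $2(d-1)$ plaquettes of $\bZ^d$, so the surplus diagonal contribution for an edge $e\in E^\circ_{\partial^k\Lambda_n}$ is precisely the $k=k(e)$ plaquettes sticking out of $\Lambda_n$, i.e.\ exactly $\langle w^{(u)},R_dw^{(u)}\rangle$. Your approach buys a more transparent interpretation of $R_d$ (it counts out-of-volume plaquettes through an edge) and avoids the explicit neighbor-set bookkeeping; the paper's approach yields the slightly stronger pointwise operator identity. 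Note also that your argument makes the dependence on Lemma \ref{lm:sigman1} inessential: comparing $\sum_{x\in\bZ^d}$ with $\sum_{x:(x,i,j)\in P_n}$ directly gives $\Sigma_n(u,u)=\langle w^{(u)},Q_dw^{(u)}\rangle-\langle w^{(u)},R_dw^{(u)}\rangle$ without ever writing down the matrix entries of $\Sigma_n$.
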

%%%%%%%%%%%%%%%%%%%%%
%%%%%%%%%%%%%%%%%%%%%
\begin{proof} Let $u\in \bR^{E_n}$ and recall that by convention $u_e=0$ for all $e\not\in E_n$. We have 
		\[(\Sigma_n u)_e = \sum_{e'\in E_n} (\Sigma_n)_{ee'} u_e = (\Sigma_n)_{ee} u_e + \sum_{e'\in N_+(e)}  u_{e'} - \sum_{e'\in N_-(e)} u_{e'}.  \] 
To evaluate the r.h.s.\ further, consider first an edge $e=(x,x+e_i)\in E_{\Lambda_n}^\circ$, $x\in\Lambda_n,   i\in[d]$. Then, as already pointed out in \cite[Section 4.3]{Cha3}, it is straightforward to check that 
		\be\label{eq:nbrs}\begin{split}
		N_+(e) &= \bigcup_{j\in [d]: j\neq i}\big\{ (x+e_i, x+e_i+e_j) \big\} \cup \big\{ (x-e_j, x) \big\},\\ 
		N_-(e) &= \bigcup_{j\in [d]: j\neq i}\big\{ (x+e_j, x+e_i+e_j) \big\} \cup \big\{  (x-e_j, x+e_i-e_j)\big\}\\
		&\hspace{1.7cm}\cup  \big\{ (x, x+e_j) \big\} \cup\big\{ (x+e_i-e_j, x+e_i) \big\} 
		\end{split}\ee
with $N_+(e), N_-(e)\subset E_n $. Abbreviating $w^{(u)}= w =(w_1,\ldots,w_d) $, we get
		\[\begin{split}
		(\Sigma_n u)_e %&= 2(d-1) a_{(x,x+e_i)} + 2 \sum_{j\in[d]: j\neq i} \big( a_{(x+e_i, x+e_i+e_j)} + a_{(x-e_j, x)} - a_{(x+e_j, x+e_i+e_j)}\\
		%&\hspace{5.4cm} - a_{(x-e_j, x+e_i-e_j)}- a_{(x, x+e_j)} - a_{(x+e_i-e_j, x+e_i)}\big)\\
		& =  2(d-1) w_i(x) - \sum_{j\in[d]: j\neq i} \big( w_i(x+e_j) + w_i(x-e_j)\big)\\
		&\hspace{0.4cm} +\sum_{j\in[d]: j\neq i} \big(   w_j(x+e_i)- w_j(x) + w_j(x-e_j) - w_j(x-e_j+e_i)\big)\\
		%&= 2(d-1) w_i(x) - \sum_{j=1}^d \big( w_i(x+e_j) + w_i(x-e_j)\big)+\big( w_i(x+e_i) + w_i(x-e_i)\big)\\
		%&\hspace{0.4cm} +\sum_{j=1}^d \big(   w_j(x+e_i)- w_j(x) + w_j(x-e_j) - w_j(x-e_j+e_i)\big)- \big(   w_i(x+e_i)- 2w_i(x) + w_i(x-e_j))\big)\\
		& =  (-\Delta w_i)(x)+\sum_{j=1}^d \big(   w_j(x+e_i)- w_j(x) + w_j(x-e_j) - w_j(x-e_j+e_i)\big)\\
		& = (-\Delta w_i)(x) + \sum_{j=1}^d \big(  (\partial_iw_j)(x)-(\partial_iw_j)(x-e_j)\big)\\
		& = (-\Delta w_i)(x) - \sum_{j=1}^d \partial_i\partial_j^*w_j (x).
		\end{split}\]
For $ e\in E_{\partial^k\Lambda_n}^\circ \subset E_{\partial \Lambda_n}\subset E_n$, $1\leq k\leq d-1$, only the diagonal term $(\Sigma_n)_{ee}$ in the previous calculation needs to be modified. In fact, in this case the positive and negative neighbor sets $N_+(e)\subset E_n$ and $N_-(e)\subset E_n$ correspond to certain subsets of the two sets displayed in \eqref{eq:nbrs}, for some of the corresponding edges in \eqref{eq:nbrs} are not contained in $E_n$. If $ e'=(x',x'+e_j)\not \in E_n$, however, then $u_{e'}=0$ by definition and hence $ w_j^{(u)}(x')=0$ as well, by \eqref{eq:defiso}. This means that we can sum over the same neighbor sets displayed in \eqref{eq:nbrs}. Taking also the modification of $(\Sigma_n)_{ee}$ into account, we thus find  
		\[(\Sigma_n u)_e =  (-\Delta w_i)(x)  +\sum_{j=1}^d (-\partial_i\partial_j^*w_j  )(x) - k w_i(x)\]
for every $e=(x,x+e_i)\in E_{\partial^k\Lambda_n}^\circ, 0\leq k\leq d-1$. By definition of $R_d$, this shows that 
		\[\begin{split}
		\Sigma_n(u,u)  &= \sum_{i=1}^d\Big( \langle w_i, (-\Delta)w_i\rangle+  \sum_{j=1}^d\langle w_i, -\partial_i\partial_j^*  w_j\rangle\Big)   - \langle w, R_dw\rangle\\ 
		%&= \sum_{i=1}^d\Big( \langle w_i, (-\Delta)w_i\rangle+ \frac12\sum_{j=1}^d  \langle w_i, (-\partial_i^*\partial_j-\partial_j^*\partial_i)w_j\rangle\Big) - \langle w, R_dw\rangle\\ 
		&=  \langle w, Q_d w\rangle - \langle w, R_dw\rangle 
		\end{split}\]
for every $u\in \bR^{E_n}$ and $w=w^{(u)}\in \bigoplus_{j=1}^d \ell^2(\Lambda_n)$. 

Finally, since the adjoint of $\partial_i$ is $\partial_i^*$, we have for every  $w\in \bigoplus_{j=1}^d \ell^2(\bZ^d)$ that
		\[\begin{split}
		\langle w, Q_dw\rangle &=  \sum_{i,j=1}^d\Big( \langle w_i, \partial_j^* \partial_j w_i\rangle + \langle w_i, (-\partial_j^*\partial_i)w_j\rangle\Big)\\
		& = \frac12 \sum_{i,j=1}^d  \Big( \|  \partial_j w_i \|^2 + \|  \partial_i w_j \|^2 - 2 \langle \partial_j w_i,  \partial_i w_j\rangle\Big) =\frac12 \sum_{i,j=1}^d \big\| \partial_i w_j-\partial_j w_i\big\|^2.
		\end{split}\]
This proves in particular the identity \eqref{eq:defQdform}.
\end{proof}

We remark that $R_d$, defined in Lemma \ref{lm:sigman2}, is non-negative in the sense of forms on $\bigoplus_{j=1}^d\ell^2(\bZ^d)$, which follows directly from its definition. Notice furthermore that\footnote{We denote constants that may depend on $d\geq2$, but not on $n\in\bN$ typically by $C, c$. In estimates that range over several lines, such constants may change from line to line.} 
		\[ 0 \leq \langle w, Q_dw\rangle \leq C \|w\|^2, \;\; 0 \leq \langle w, R_dw\rangle \leq C \|w\|^2, \;\forall w\in \bigoplus_{j=1}^d\ell^2(\bZ^d),\]
for some constant $C=C_d>0$. This implies that both $Q_d$ and $R_d$ are bounded operators on $\bigoplus_{j=1}^d\ell^2(\bZ^d)$ (and hence on $\bigoplus_{j=1}^d\ell^2(\Lambda_n)$, for all $n\in\bN$), i.e.\
		\be\label{eq:QdRdbnd} \| Q_d\|_{\t{op}} = \sup_{\|w\|=1} |\langle w, Q_dw\rangle| \leq C,\;\; \| R_d\|_{\t{op}}\leq C. \ee

Our next goal is to compare $Q_d$ as an operator on $\bigoplus_{j=1}^{d-1}\ell^2(\Lambda_n)$ with boundary conditions induced by $\Omega_n^{1,\t{a}}$ with $Q_d$ and suitably modified boundary conditions. We proceed in two main steps. First, we show that the perturbation $R_d$ from Lemma \ref{lm:sigman2} is negligible in view of the computation of $K_d$. Afterwards, we focus on the analysis of $Q_d$ on $\bigoplus_{j=1}^{d-1}\ell^2(\Lambda_n)$ and compare the boundary conditions induced by $\Omega_n^{1,\t{a}}$ with periodic boundary conditions in a slightly enlarged box. 

Let us first recall some basic facts about the spectrum of symmetric matrices. Let $k\geq 1$ and let $A \in \bR^{k\times k}$ be a symmetric matrix that is non-negative in the sense of forms. Its eigenvalues (counted with multiplicity) in increasing order are denoted by 
		\[0\leq \lambda_1(A)\leq \lambda_2(A)\leq \ldots\leq \lambda_k(A).\]
They are characterized by the min-max formula (see e.g.\ \cite[Section 1.3]{Tao}) 
		\be\label{eq:minmax}  \lambda_j(A) = \inf_{\substack{V\subset \bR^k,\\\t{dim}(V) = j}} \,\sup_{\substack{u\in \bR^k,\\\|u\|=1} } \langle u, Au\rangle,\;1\leq j \leq k,\ee
where the infimum is taken over all $j$-dimensional linear subspaces of $\bR^k$. Normalized eigenvectors to the smallest eigenvalue $\lambda_1(A)$ are called the ground states of $A$. Since $A \in \bR^{k\times k}$ extends canonically to a non-negative self-adjoint matrix $A \in \bC^{k\times k}$ with the same spectrum, note that the infimum in \eqref{eq:minmax} may also be taken over subspaces $V\subset \bC^k$.

A simple consequence of \eqref{eq:minmax} is that for an orthogonal projection $\Pi: \bR^k\to\bR^k$ with $ \t{dim}\,\t{ran}(\Pi) = l$, the eigenvalues of $\Pi A\Pi:\t{ran}(\Pi)\to \t{ran}(\Pi)$, viewed as a linear map from $\t{ran}(\Pi)\simeq \bR^l$ to itself, are related to those of $A$ by  
		\be\label{eq:projconseq1}      \lambda_j(A) \leq \lambda_j(\Pi A\Pi ) \leq \lambda_{j+k-l}(A), \;\forall \,1\leq j\leq l. \ee
The first inequality follows from the upper bound on the r.h.s.\ of \eqref{eq:minmax} that is obtained by taking the infimum over $j$-dimensional subspaces $V\subset \t{ran}(\Pi)$. The second inequality in \eqref{eq:projconseq1} follows from \eqref{eq:minmax} and the fact that for every $(j+k-l)$-dimensional vector space $V\subset \bR^k$, we have $\text{dim}\,\Pi(V)\geq j$. This is a consequence of the rank formula, because
		\[ \t{dim}(V) = \t{dim} \,\t{ker}(\Pi_{|V})+ \t{dim} \,\t{ran}(\Pi_{|V}) \leq \t{dim}\, \t{ker}(\Pi) + \text{dim} \,\Pi(V) = k-l + \text{dim} \,\Pi(V). \] 

Consider now $ \Sigma_n^0\in \bR^{|E_n^1|\times |E_n^1|}$. By \eqref{eq:space}, \eqref{eq:bndcond} and Lemma \ref{lm:sigman2}, we have that 
		\[ \lambda_j\big(\Sigma_n^0\big) = \lambda_j \big( \Pi_{\Omega_{n}^{1,\t{a}}} (Q_d- R_d)\Pi_{\Omega_{n}^{1,\emph{a}}} \big),\; \forall 1\leq j \leq k_{\t a}= |E_n^1|,  \] 
where $ \Pi_{\Omega_{n}^{1,\t{a}}}$ denotes the orthogonal projection from $\bigoplus_{j=1}^d\ell^2(\bZ^d)$ onto $ \Omega_{n}^{1,\t{a}} \subset \bigoplus_{j=1}^{d-1}\ell^2(\Lambda_n)$ and where $\Pi_{\Omega_{n}^{1,\t{a}}} (Q_d- R_d)\Pi_{\Omega_{n}^{1,\t{a}}}$ is viewed as a linear map from $\Omega_{n}^{1,\t{a}} \simeq \bR^{k_{\t a}}$ to itself. Hence
		\be \label{eq:ljequi} K_d = -\lim_{n\to\infty}\frac{1}{2n^d} \t{tr} \log \big(\Pi_{\Omega_{n}^{1,\t{a}}} (Q_d- R_d)\Pi_{\Omega_{n}^{1,\t{a}}}\big), \ee
where the trace $\t{tr} = \t{tr}_{|\Omega_{n}^{1,\emph{a}}}$ is understood as the trace over $\Omega_{n}^{1,\emph{a}}\subset \bigoplus_{j=1}^{d-1}\ell^2(\Lambda_n)$. Recall here that $w\in  \Omega_{n}^{1,\emph{a}}$ implies $w_d=0\in \ell^2(\Lambda_n)$, by the axial gauge.

An important fact proved in \cite[Lemma 13.1]{Cha1} and used repeatedly below is that the lowest eigenvalue $\lambda_1\big(\Sigma_n^0\big)$ of $\Sigma_n^0 $ is strictly positive in the sense that
		\be\label{eq:gap1} \lambda_1\big(\Sigma_n^0\big) = \lambda_1 \big( \Pi_{\Omega_{n}^{1,\emph{a}}} (Q_d- R_d)\Pi_{\Omega_{n}^{1,\t{a}}} \big)\geq \frac{C}{n^{d+2}} >0. \ee
On the other hand, from the remarks after Lemma \ref{lm:sigman2} and from the bound \eqref{eq:QdRdbnd}, we obtain 
		\be \label{eq:lupperbnds}  \max_{ j=1,\ldots, k_{\t a}} \lambda_j \big( \Pi_{\Omega_{n}^{1,\emph{a}}} (Q_d- R_d)\Pi_{\Omega_{n}^{1,\t{a}}} \big)\leq \max_{ j\in\bN} \lambda_j \big(  Q_d \big)\leq \|Q_d\|_{\t{op}} \leq C \ee
for some constant $C=C_d>0$ that depends on $d $, but that is independent of $n$. 

%%%%%%%%%%%%%%%%%%%%%
%%%%%%%%%%%%%%%%%%%%%
\begin{prop}\label{prop:removeRd} We have that	
		\be\label{eq:KdQd} K_d = - \lim_{n\to\infty}\frac{1}{2n^d} \emph{tr} \log \big(\Pi_{\Omega_{n}^{1,\emph{a}}} Q_d \Pi_{\Omega_{n}^{1,\emph{a}}}\big). \ee
Moreover, for every subspace $V\subset \Omega_{n}^{1,\emph{a}}$ such that $\emph{dim}\,V^\bot \leq Cn^{d-1}$, we have that 
		\be\label{eq:KdQdV}  \frac{1}{2n^d} \emph{tr} \log \big(\Pi_{\Omega_{n}^{1,\emph{a}}} Q_d \Pi_{\Omega_{n}^{1,\emph{a}}}\big) =  \frac{1}{2n^d} \emph{tr} \log \big(\Pi_{V} Q_d \Pi_{V}\big)+O(\log n/n), \ee
where on the right hand side $\Pi_V$ denotes the orthogonal projection from $\Omega_{n}^{1,\emph{a}}$ to $V$ and where the trace is understood as trace $\emph{tr}=\emph{tr}_{|V}$ over $V\subset \Omega_{n}^{1,\emph{a}}$.
\end{prop}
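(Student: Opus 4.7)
The plan is to derive both \eqref{eq:KdQd} and \eqref{eq:KdQdV} from a single eigenvalue-interlacing argument, relying on the uniform positivity \eqref{eq:gap1} of the ground state, the uniform bound \eqref{eq:QdRdbnd} on $\|Q_d\|_\t{op}$, and the fact that $R_d$ is supported on the $O(n^{d-1})$ boundary edges of $\Lambda_n$. I first set $A_n=\Pi_{\Omega_n^{1,\t{a}}} Q_d \Pi_{\Omega_n^{1,\t{a}}}$ and $B_n=\Pi_{\Omega_n^{1,\t{a}}}(Q_d-R_d)\Pi_{\Omega_n^{1,\t{a}}}$, viewed as self-adjoint operators on $\Omega_n^{1,\t{a}}$ of dimension $k_\t{a}=|E_n^1|$. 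Since $R_d\geq 0$, we have $A_n\geq B_n$, so \eqref{eq:gap1} yields $\lambda_1(A_n)\geq \lambda_1(B_n)\geq c n^{-d-2}$, while \eqref{eq:QdRdbnd} gives $\lambda_j(A_n),\,\lambda_j(B_n)\leq C$ for all $j$. In particular $|\log\lambda_j(A_n)|,|\log\lambda_j(B_n)|\leq C\log n$ uniformly in $j$ and $n$.

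To prove \eqref{eq:KdQdV}, I fix $V\subset\Omega_n^{1,\t{a}}$ with $r:=\dim V^\bot\leq C n^{d-1}$. The projection interlacing inequality \eqref{eq:projconseq1}, applied inside $\Omega_n^{1,\t{a}}$, gives
\[
\lambda_j(A_n)\leq \lambda_j(\Pi_V A_n \Pi_V)\leq \lambda_{j+r}(A_n),\qquad 1\leq j\leq k_\t{a}-r.
\]
Summing $\log$ over $j$ and comparing with $\t{tr}\log A_n=\sum_{j=1}^{k_\t{a}}\log\lambda_j(A_n)$, the upper and lower telescoping bounds show that $\t{tr}\log A_n-\t{tr}_V\log(\Pi_V A_n\Pi_V)$ lies between $\sum_{j=1}^{r}\log\lambda_j(A_n)$ and $\sum_{j=k_\t{a}-r+1}^{k_\t{a}}\log\lambda_j(A_n)$, each of which is a sum of at most $r\leq C n^{d-1}$ terms of size $\leq C\log n$. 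Hence $|\t{tr}\log A_n-\t{tr}_V\log(\Pi_V A_n\Pi_V)|\leq C n^{d-1}\log n$, and dividing by $2n^d$ yields the $O(\log n/n)$ error in \eqref{eq:KdQdV}.

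For \eqref{eq:KdQd}, I then choose
\[
V_*=\Big\{w\in\Omega_n^{1,\t{a}}\colon w_i(x)=0\text{ whenever }(x,x+e_i)\in E_{\partial^k\Lambda_n}^\circ\text{ for some }k\geq 1\Big\},
\]
so that by the definition of $R_d$ in Lemma \ref{lm:sigman2}, $R_d w=0$ for every $w\in V_*$, and therefore $\Pi_{V_*}A_n\Pi_{V_*}=\Pi_{V_*}B_n\Pi_{V_*}$. The codimension $\dim V_*^\bot$ is bounded by the total number of boundary edges of $\Lambda_n$, which is $O(n^{d-1})$. Applying the estimate of the previous paragraph to $A_n$, and the identical argument to $B_n$ (whose only inputs are the eigenvalue window $[c n^{-d-2},C]$, valid for both operators), I obtain
\[
\frac{1}{2n^d}\t{tr}\log A_n-\frac{1}{2n^d}\t{tr}\log B_n \;=\; \frac{1}{2n^d}\bigl(\t{tr}_{V_*}\!\log\Pi_{V_*}\!A_n\Pi_{V_*}-\t{tr}_{V_*}\!\log\Pi_{V_*}\!B_n\Pi_{V_*}\bigr)+O(\log n/n)\;=\;O(\log n/n),
\]
which combined with \eqref{eq:ljequi} yields \eqref{eq:KdQd}. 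The only nontrivial input, and the sole obstacle beyond elementary min-max, is the polynomial lower bound \eqref{eq:gap1} on $\lambda_1(B_n)$ from \cite[Lemma 13.1]{Cha1}: without it, $|\log\lambda_1|$ could be much larger than $O(\log n)$ and the $O(n^{d-1})$ boundary corrections would cease to be negligible after division by $2n^d$.
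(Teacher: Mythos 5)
Your proposal is correct and follows essentially the same route as the paper: eigenvalue interlacing via \eqref{eq:projconseq1} combined with the spectral window $[cn^{-d-2},C]$ from \eqref{eq:gap1} and \eqref{eq:lupperbnds}, together with the fact that $R_d$ is supported on $O(n^{d-1})$ boundary edges. The only difference is organizational — you prove \eqref{eq:KdQdV} first and then apply it with a subspace annihilating $R_d$ to deduce \eqref{eq:KdQd}, whereas the paper runs the sandwich argument for \eqref{eq:KdQd} directly and notes that \eqref{eq:KdQdV} follows analogously — which does not change the substance of the argument.
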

%%%%%%%%%%%%%%%%%%%%%
%%%%%%%%%%%%%%%%%%%%%
\begin{proof}
Since $ R_d$ is a diagonal matrix with non-zero entries only for a subset of points contained in $ \partial\Lambda_n$ with $|\partial\Lambda_n|=O(n^{d-1})$ (for each block associated to the $d-1$ sectors in $\bigoplus_{j=1}^{d-1} \ell^2(\Lambda_n)\simeq \bR^{(d-1)|\Lambda_n|}$), it is clear that for some $C>0$, we have that
		\[\t{dim}\, \t{ran}(\Pi_{\Omega_{n}^{1,\t{a}}} R_d\Pi_{\Omega_{n}^{1,\t{a}}}) = Cn^{d-1}.\]
Let $\Pi_{\bot}$ denote the orthogonal projection from $ \Omega_{n}^{1,\t{a} }$ onto the orthogonal complement 
		\[\t{ran}(\Pi_{\Omega_{n}^{1,\t{a}}} R_d \Pi_{\Omega_{n}^{1,\t{a}}})^\bot\subset \Omega_{n}^{1,\t{a} }\]
with $ \t{dim}\,\t{ran}(\Pi_\bot)=k_\bot = k_{\t{a}}- Cn^{d-1}$. Then, we get from \eqref{eq:minmax}, \eqref{eq:projconseq1}, \eqref{eq:gap1} and \eqref{eq:lupperbnds}
		\[\begin{split}	
		\frac{1}{2n^d} \t{tr} \log \big(\Pi_{\Omega_{n}^{1,\emph{a}}} Q_d \Pi_{\Omega_{n}^{1,\t{a}}}\big) &\geq \frac{1}{2n^d} \t{tr} \log \big(\Pi_{\Omega_{n}^{1,\t{a}}} (Q_d- R_d)\Pi_{\Omega_{n}^{1,\t{a}}}\big)\\
		& \geq \frac{1}{2n^d} \sum_{j=1+k_{\t{a}}-k_\bot}^{k_{\t{a}}}\log \lambda_j\big(\Pi_{\Omega_{n}^{1,\t{a}}} (Q_d- R_d)\Pi_{\Omega_{n}^{1,\t{a}}}\big) \\  
		&\hspace{0.5cm}+ \frac1{2n^d}\sum_{j=1}^{k_{\t{a}}-k_\bot}  \log \lambda_1\big(\Pi_{\Omega_{n}^{1,\t{a}}} (Q_d- R_d)\Pi_{\Omega_{n}^{1,\t{a}}}\big) \\
		& \geq \frac{1}{2n^d} \sum_{j=1}^{k_\bot}\log \lambda_{j } \big(\Pi_\bot\Pi_{\Omega_{n}^{1,\t{a}}} Q_d \Pi_{\Omega_{n}^{1,\t{a}}}\Pi_\bot\big) +O( \log n/n)\\
		&\geq \frac{1}{2n^d} \sum_{j=1}^{k_{\t{a}}}\log \lambda_{j } \big(\Pi_{\Omega_{n}^{1,\t{a}}} Q_d \Pi_{\Omega_{n}^{1,\t{a}}}\big) +O( \log n/n)\\
		&\hspace{0.5cm}- \frac{C}n\log\lambda_{k_{\t{a}}}\big(\Pi_{\Omega_{n}^{1,\t{a}}} Q_d \Pi_{\Omega_{n}^{1,\t{a}}}\big)\\
		& = \frac{1}{2n^d} \t{tr} \log \big(\Pi_{\Omega_{n}^{1,\emph{a}}} Q_d \Pi_{\Omega_{n}^{1,\t{a}}}\big) +O( \log n/n).
		\end{split}\]
Notice that in the third step, we used that $ \Pi_\bot\Pi_{\Omega_{n}^{1,\t{a}}} R_d \Pi_{\Omega_{n}^{1,\t{a}}}\Pi_\bot=0$, by definition of $\Pi_\bot$. Combining the first and last inequalities with \eqref{eq:ljequi}, this shows that
		\[K_d =-\lim_{n\to\infty}\frac{1}{2n^d} \t{tr} \log \big(\Pi_{\Omega_{n}^{1,\t{a}}} (Q_d- R_d)\Pi_{\Omega_{n}^{1,\t{a}}}\big)=- \lim_{n\to\infty}\frac{1}{2n^d} \t{tr} \log \big(\Pi_{\Omega_{n}^{1,\t{a}}} Q_d \Pi_{\Omega_{n}^{1,\t{a}}}\big). \]

The identity \eqref{eq:KdQdV} follows analogously. Setting $k_V = \t{dim}\,V =k_{\t{a}}-Cn^{d-1}$, we find 
		\[\begin{split}
		\frac{1}{2n^d} \t{tr} \log \big(\Pi_{\Omega_{n}^{1,\emph{a}}} Q_d \Pi_{\Omega_{n}^{1,\t{a}}}\big) & \geq \frac{1}{2n^d} \sum_{j=1+k_{\t{a}}-k_V}^{k_{\t{a}}}\log \lambda_j\big(\Pi_{\Omega_{n}^{1,\t{a}}} Q_d \Pi_{\Omega_{n}^{1,\t{a}}}\big)+O( \log n/n) \\
		&\geq \frac{1}{2n^d} \sum_{j=1}^{k_V}\log \lambda_{j } \big( \Pi_{V} Q_d \Pi_{V}\big) +O( \log n/n)\\
		& = \frac{1}{2n^d} \t{tr} \log \big(\Pi_{V} Q_d \Pi_{V}\big)+O( \log n/n)\\
		&\geq \frac{1}{2n^d} \sum_{j=1}^{k_{\t{a}}}\log \lambda_{j } \big(\Pi_{\Omega_{n}^{1,\t{a}}} Q_d \Pi_{\Omega_{n}^{1,\t{a}}}\big) +O( \log n/n)\\
		& = \frac{1}{2n^d} \t{tr} \log \big(\Pi_{\Omega_{n}^{1,\emph{a}}} Q_d \Pi_{\Omega_{n}^{1,\t{a}}}\big) +O( \log n/n),
		\end{split}\]
which proves \eqref{eq:KdQdV}.
\end{proof}

In the next step, we switch from the axial gauge boundary conditions induced by $\Pi_{\Omega_{n}^{1,\t{a}}}$ to periodic boundary conditions in a slightly enlarged box. Our goal is to express $K_d$ as the free energy of lattice Maxwell theory with periodic boundary conditions above its zero ground state energy. This has the advantage that $Q_d$, viewed as a form on the torus $\bigoplus_{j=1}^{d-1}\ell^2(\bZ^d/n\bZ^d)$, can be diagonalized explicitly by the discrete Fourier transform.

Recall from \eqref{eq:space} and \eqref{eq:bndcond} that
		\[\begin{split}
		\Omega_{n}^{1,\t{a}} &=  \Big\{  w= (w_i)_{i=1}^{d} \in (\bR^{\Lambda_n})^{d}:   (w_i)_{|V_{n,i}^0} = 0,\,\forall  i\in [d]\Big\},\\
		V_{n,i}^0& = \big\{x\in \Lambda_n: (x,x+e_i)\in E_n^0 \big\}\cup \big\{x\in \partial \Lambda_n: (x, x+e_i)\not \in  E_n \big\}.
		\end{split}\]
As pointed out earlier, $w\in \Omega_{n}^{1,\t{a}} $ implies that $w_d=0$, so that we identify $ \Omega_{n}^{1,\t{a}} \subset \bigoplus_{j=1}^{d-1}\ell^2(\Lambda_n)\hookrightarrow \bigoplus_{j=1}^{d}\ell^2(\bZ^d)$, because either $(x,x+e_d)\in E_n^0$ or $(x,x+e_d)\not \in E_n$, for every $x\in \Lambda_n$. Moreover, by $E_n^0$ in \eqref{eq:en0}, observe that for every $1\leq i\leq d-1$, we have 
		\[ V_{n,i}^0\subset\partial^{d-i}\Lambda_n \cup \big\{x\in \partial \Lambda_n: (x, x+e_i)\not \in  E_n \big\}\subset \partial \Lambda_n.\]

Now, let us also recall some notation related to the periodic setting. For every $n\in\bN$, we denote by $\bT^d_n$ the discrete torus $\bT_n^d = \bZ^d/n\bZ^d$ and we define the translation invariant operator $Q_d^{\t{per}}: \bigoplus_{j=1}^{d-1}\ell^2(\bT_n^d)\to \bigoplus_{j=1}^{d-1}\ell^2(\bT_n^d)$ as in \eqref{eq:defQd}, i.e.\
		\[ \big(Q_d^{\t{per}} w\big)_i =   -\Delta w_i -\sum_{j=1}^{d-1}  \partial_i \partial_j^* w_j , \;\forall w\in \bigoplus_{j=1}^{d-1}\ell^2(\bT_n^d), i\in [d-1]. \]
Here, the derivatives $\partial_i, \partial^*_j$, for $i,j\in [d]$, and $(-\Delta)$ are defined pointwise as in \eqref{eq:dd*}, but with $x\in\bZ^d$ in \eqref{eq:dd*} replaced by $x\in \bT_n^d$. We have in particular that   
		\be\label{eq:Qperqform} \langle w,Q_d^{\t{per}} w\rangle = \frac12\sum_{i,j=1}^{d-1} \big\|\partial_i w_j-\partial_jw_i\big\|^2 + \sum_{i=1}^{d-1} \big\|\partial_d w_i\big\|^2, \;\forall w\in\bigoplus_{j=1}^{d-1}\ell^2(\bT_n^d).  \ee
		
To compare $ Q_d^{\t{per}}$ with $\Pi_{\Omega_{n}^{1,\t{a}}} Q_d \Pi_{\Omega_{n}^{1,\t{a}}}$, we slightly enlarge the volume and consider $\bT_{n+5}^{d}$ whose vertex set we identify with 
		\[\Lambda_{n+4}'=\{-2,-1,0,\ldots,n+1, n+2\}^d\simeq \Lambda_{n+4}.\] 
Extending every $w\in \Omega_{n}^{1,\t{a}}\subset \bigoplus_{j=1}^{d-1}\ell^2(\Lambda_n)$ by $w_{|\Lambda'_{n+4}\setminus\Lambda_n}=0$, the space $\Omega_{n}^{1,\t{a}}$ embeds canonically into $\bigoplus_{j=1}^{d-1}\ell^2(\bT_{n+5}^{d})$ through the map
		\[ \Omega_{n}^{1,\t{a}}\ni w\mapsto \iota_{\t{per}}(w)\in \bigoplus_{j=1}^{d-1}\ell^2(\bT_{n+5}^{d}) ,\;\iota_{\t{per}}(w)\big(x+(n+4)q\big) = w(x),\,\forall x\in \Lambda'_{n+4}, q\in\bZ^d. \]
We denote the subspace $ \iota_{\t{per}}\big(\Omega_{n}^{1,\t{a}}\big) \subset \bigoplus_{j=1}^{d-1}\ell^2(\bT_{n+5}^{d})$ by 
		\be \label{eq:dirichper} \begin{split}
		\Omega_{n+5}^{1,\t{a,p}} &= \Big\{ \iota_{\t{per}}(w): w\in \Omega_{n}^{1,\t{a}}, w_{|\Lambda'_{n+4}\setminus\Lambda_n}=0\Big\} \subset  \bigoplus_{j=1}^{d-1}\ell^2(\bT_{n+5}^{d}). 
		\end{split}\ee
Notice that $ \iota_{\t{per}}: \Omega_{n}^{1,\t{a}}\to\Omega_{n+5}^{1,\t{a,p}}$ defines an isometric isomorphism, i.e.\
		\be\label{eq:iso} \begin{split}
		\langle w, w'\rangle = \sum_{i=1}^d \sum_{x\in \Lambda_n} w(x)w'(x) &=\sum_{i=1}^d \sum_{x\in  \Lambda_{n+4}'} w(x)w'(x)  \\
		& =\sum_{i=1}^d \sum_{x\in  \Lambda_{n+4}'} \iota_{\t{per}}(w)(x)\iota_{\t{per}}(w')(x) \\
		&=\langle  \iota_{\t{per}}(w),  \iota_{\t{per}}(w')\rangle 
		\end{split}\ee
for every $w, w'\in \Omega_{n}^{1,\t{a}} $, and furthermore that 
		\be \label{eq:QdQdp}\langle w, \Pi_{\Omega_{n}^{1,\t{a}}}  Q_d \Pi_{\Omega_{n}^{1,\t{a}}} w\rangle = \langle w,  Q_d  w\rangle =  \big\langle \iota_{\t{per}}(w), Q_d^{\t{per}} \iota_{\t{per}}(w)\big\rangle \ee
for all $w \in \Omega_{n}^{1,\t{a}}$. The identity \eqref{eq:QdQdp} follows from the fact that both $Q_d $ and $Q_d^{\text{per}}$ are second order, nearest-neighbor difference operators and from $w_{|\Lambda'_{n+4}\setminus\Lambda_n}=0$ so that  
		\[ (Q_d w)(x) = \big(Q_d^{\t{per}} \iota_{\t{per}}(w)\big)(x),\;\forall x\in \Lambda_{n+4}',\] 
pointwise (where the left hand side is as in \eqref{eq:defQd}). In particular, this implies \eqref{eq:QdQdp}. 

Denoting by $\Pi_{\Omega_{n+5}^{1,\t{a,p}}}$ the orthogonal projection of $ \bigoplus_{j=1}^{d-1}\ell^2(\bT_{n+5}^{d})$ to $\Omega_{n+5}^{1,\t{a,p}}$ and viewing 
		\[\Omega_{n+5}^{1,\t{a,p}}Q_d^{\t{per}}\Omega_{n+5}^{1,\t{a,p}}:\text{ran}(\Omega_{n+5}^{1,\t{a,p}})\to \text{ran}(\Omega_{n+5}^{1,\t{a,p}}) \] 
as a linear operator from $\text{ran}(\Omega_{n+5}^{1,\t{a,p}})  \simeq \bR^{k_\t{a}}$  to itself, where $k_\t{a}=\t{dim}\ \t{ran}( \Pi_{\Omega_{n}^{1,\t{a}}})$, the previous observations and Proposition \ref{prop:removeRd} imply that
		\be\begin{split} \label{eq:linkper}  
		K_d &= - \lim_{n\to\infty} \frac{1}{2n^d}  \t{tr} \log \big(\Pi_{\Omega_{n}^{1,\t{a}}} Q_d  \Pi_{\Omega_{n}^{1,\t{a}}}\big)\\
		&= - \lim_{n\to\infty}\Big(1+\frac5n\Big)^d\frac{1}{2(n+5)^d}  \t{tr} \log \big(\Pi_{\Omega_{n+5}^{1,\t{a,p}}} Q_d^{\t{per}} \Pi_{\Omega_{n+5}^{1,\t{a,p}}}\big)\\
		&= - \lim_{n\to\infty}\frac{1}{2n^d} \t{tr} \log \big(\Pi_{\Omega_{n}^{1,\t{a,p}}} Q_d^{\t{per}} \Pi_{\Omega_{n}^{1,\t{a,p}}}\big).
		\end{split} \ee		
Here, the trace $ \t{tr}= \t{tr}_{|\Omega_{n}^{1,\t{a,p}}}$ in the last line is over $\Omega_{n}^{1,\t{a,p}}\subset \bigoplus_{j=1}^{d-1}\ell^2(\bT_{n})$. 

Similarly, if $V\subset \Omega_{n}^{1,\t{a,p}}$ is a subspace so that $V^\bot\subset \Omega_{n}^{1,\t{a,p}}$ satisfies $\t{dim}\,V^\bot\leq Cn^{d-1}$, the isometric property of $ \iota_{\t{per}}: \Omega_{n-5}^{1,\t{a}}\to\Omega_{n}^{1,\t{a,p}}$ and Proposition \ref{prop:removeRd} imply that 
		\be \label{eq:linkper2}  
		K_d = - \lim_{n\to\infty}\frac{1}{2n^d} \t{tr} \log \big(\Pi_{V} Q_d^{\t{per}} \Pi_{V}\big).
		\ee
Let us record furthermore that 
		\be \label{eq:gap3} \lambda_1\big(\Pi_{\Omega_{n}^{1,\t{a,p}}} Q_d^{\t{per}} \Pi_{\Omega_{n}^{1,\t{a,p}}}\big) = \lambda_1\big(\Pi_{\Omega_{n-5}^{1,\t{a}}} Q_d \Pi_{\Omega_{n-5}^{1,\t{a}}}\big) \geq \frac{C}{(n-5)^{d+2}}\geq \frac{C}{n^{d+2}}>0  \ee
which follows from \eqref{eq:gap1} and \eqref{eq:iso}, as well as 
		\[ \max_{j=1,\ldots,k_0} \lambda_j\big(\Pi_{\Omega_{n}^{1,\t{a,p}}} Q_d^{\t{per}} \Pi_{\Omega_{n}^{1,\t{a,p}}}\big)\leq \| Q_d^{\t{per}}\|_{\t{op}}\leq C,  \]
which can be proved similarly like \eqref{eq:QdRdbnd}. 

Our final characterization of $K_d$ is based on a comparison of $\Pi_{\Omega_{n}^{1,\t{a,p}}} Q_d^{\t{per}} \Pi_{\Omega_{n}^{1,\t{a,p}}}$ with $Q_d^{\t{per}}$ projected onto the orthogonal complement of the space of its ground states. Set 
		\[\begin{split}
		\Omega_{n}^{1,\t{p},+} =   \ker \big(Q_d^{\t{per}}\big)^\bot \subset \bigoplus_{j=1}^{d-1}\ell^2(\bT_n^d)
		\end{split}\]
and let $ \Pi_{\Omega_{n}^{1,\t{p},+}}$ denote the orthogonal projection of $\bigoplus_{j=1}^{d-1}\ell^2(\bT_n^d) $ onto $\Omega_{n}^{1,\t{p},+} $.

%%%%%%%%%%%%%%%%%%%%%
%%%%%%%%%%%%%%%%%%%%%
\begin{prop} \label{prop:Kdfinal} Consider $Q_d^{\emph{per}}:\bigoplus_{j=1}^{d-1}\ell^2(\bT_n^d)\to \bigoplus_{j=1}^{d-1}\ell^2(\bT_n^d)$. Then 
		\be  \label{eq:propfin1} \emph{dim} \ker \big(Q_d^{\emph{per}}\big)\leq Cn^{d-1}\ee
and we have that $\emph{spec} \big(Q_d^{\emph{per}}\big) = (\epsilon_p)_{p\in\Gamma_n^*}$, where 
		\be  \label{eq:propfin2}    \epsilon_p  = 2 \sum_{k=1}^{d} \big(1-\cos(2\pi p_k)\big). \ee
As a consequence, we get
		\be  \label{eq:propfin3}  \lambda_1\big( \Pi_{\Omega_{n}^{1,\emph{p},+}} Q_d^{\emph{per}}\Pi_{\Omega_{n}^{1,\emph{p},+}}\big) \geq \frac{C}{n^2}.\ee
		
Moreover, for every subspace $V\subset  \Omega_{n}^{1,\emph{p},+}$ with $ \emph{dim}\, V^\bot \leq Cn^{d-1}$, we have that
		\be  \label{eq:propfin4}   \frac{1}{2n^d} \emph{tr} \log \big(\Pi_V Q_d^{\emph{per}} \Pi_V \big) = \frac{1}{2n^d} \emph{tr} \log \big(\Pi_{\Omega_{n}^{1,\emph{p},+}} Q_d^{\emph{per}} \Pi_{\Omega_{n}^{1,\emph{p},+}} \big) + O(\log n/n), \ee
where $\Pi_V$ denotes the orthogonal projection of $\Omega_{n}^{1,\emph{p},+}$ onto $V$, so that in particular
		\be  \label{eq:propfin5}  K_d = - \lim_{n\to\infty}\frac{1}{2n^d} \emph{tr} \log \big(\Pi_{\Omega_{n}^{1,\emph{p},+}} Q_d^{\emph{per}} \Pi_{\Omega_{n}^{1,\emph{p},+}} \big).\ee
\end{prop}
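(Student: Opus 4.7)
The plan is to diagonalize $Q_d^{\t{per}}$ explicitly by the discrete Fourier transform on $\bT_n^d$, read off the spectrum and the kernel directly, and then adapt the min-max comparison argument of Proposition \ref{prop:removeRd} to replace the axial-gauge subspace $\Omega_n^{1,\t{a,p}}$ by the more convenient subspace $\Omega_n^{1,\t{p},+} = \ker(Q_d^{\t{per}})^\bot$.

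First I would expand every $w\in \bigoplus_{j=1}^{d-1}\ell^2(\bT_n^d)$ in the Fourier basis $e_p(x) = n^{-d/2}\exp(2\pi i p\cdot x)$ indexed by $p\in \Gamma_n := \{0,1/n,\ldots,(n-1)/n\}^d$. On each mode $\partial_k$ acts as multiplication by $v_k(p) := e^{2\pi i p_k}-1$, $\partial_k^\ast$ by $\overline{v_k(p)}$, and $-\Delta$ by $\epsilon_p = \sum_{k=1}^d |v_k(p)|^2 = 2\sum_{k=1}^d(1-\cos(2\pi p_k))$. The operator $Q_d^{\t{per}}$ therefore reduces at each $p$ to the Hermitian $(d-1)\times (d-1)$ matrix $M(p) = \epsilon_p\, I_{d-1} - \tilde v(p)\, \tilde v(p)^\ast$, where $\tilde v(p) := (v_1(p),\ldots,v_{d-1}(p))^T$. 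As a rank-one perturbation of $\epsilon_p I_{d-1}$, $M(p)$ has eigenvalue $\epsilon_p$ of multiplicity $d-2$ on $\tilde v(p)^\bot$ and eigenvalue $\epsilon_p - |\tilde v(p)|^2 = |v_d(p)|^2 = 2(1-\cos(2\pi p_d))$ of multiplicity $1$ along $\tilde v(p)$. Indexing the full eigenvalue list over a suitable multi-set $\Gamma_n^\ast$ gives \eqref{eq:propfin2}. The zero modes are then immediate: the transverse branch vanishes only at $p=0$, contributing at most $d-2$ modes there, while the longitudinal branch vanishes on the hyperplane $\{p_d=0\}$, contributing $n^{d-1}$ modes; summing shows \eqref{eq:propfin1}. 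For \eqref{eq:propfin3}, the minimum nonzero value in each branch equals $2(1-\cos(2\pi/n))\ge c/n^2$.

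For \eqref{eq:propfin4} I would argue exactly as for \eqref{eq:KdQdV} in Proposition \ref{prop:removeRd}. The hypothesis $\dim V^\bot\le Cn^{d-1}$ together with the min-max inequality \eqref{eq:projconseq1} implies that $\lambda_j(\Pi_V Q_d^{\t{per}}\Pi_V)$ and $\lambda_j(\Pi_{\Omega_n^{1,\t{p},+}} Q_d^{\t{per}}\Pi_{\Omega_n^{1,\t{p},+}})$ agree up to an index shift of size at most $Cn^{d-1}$. The corresponding $Cn^{d-1}$ missing logarithms are controlled uniformly since $\|Q_d^{\t{per}}\|_{\t{op}}\le C$ and the gap \eqref{eq:propfin3} together give $|\log\lambda_j|\le C\log n$ for every relevant $j$, so the trace discrepancy is at most $Cn^{d-1}\log n/n^d = O(\log n/n)$.

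Finally, for \eqref{eq:propfin5} I would take $V := \Omega_n^{1,\t{a,p}}\cap \Omega_n^{1,\t{p},+}$. The elementary identity $\dim(A\cap B)\ge \dim A+\dim B-\dim H$ implies that $V$ has codimension at most $\dim\ker(Q_d^{\t{per}}) = O(n^{d-1})$ inside $\Omega_n^{1,\t{a,p}}$ and codimension at most $\dim(\Omega_n^{1,\t{a,p}})^\bot = O(n^{d-1})$ inside $\Omega_n^{1,\t{p},+}$. Applying \eqref{eq:linkper2} to $V\subset \Omega_n^{1,\t{a,p}}$ and \eqref{eq:propfin4} to $V\subset \Omega_n^{1,\t{p},+}$ then yields two representations of the common limit $-\lim_{n\to\infty}\frac{1}{2n^d}\t{tr}\log(\Pi_V Q_d^{\t{per}}\Pi_V)$, and chaining them gives \eqref{eq:propfin5}. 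The main obstacle is essentially bookkeeping: tracking the codimensions of the various subspaces and the index shifts produced by the min-max principle; the Fourier-side spectral analysis itself is routine thanks to translation invariance on the torus.
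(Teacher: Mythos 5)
Your proposal is correct and follows essentially the same route as the paper: Fourier diagonalization of $Q_d^{\t{per}}$ into $(d-1)\times(d-1)$ blocks, the min-max/index-shift comparison for subspaces of codimension $O(n^{d-1})$, and the choice $V=\Omega_n^{1,\t{a,p}}\cap\Omega_n^{1,\t{p},+}$ to chain \eqref{eq:linkper2} with \eqref{eq:propfin4}. The only (pleasant) difference is that you read off the block spectrum from the rank-one structure $\epsilon_p I_{d-1}-\tilde v(p)\tilde v(p)^\ast$ rather than exhibiting the explicit eigenvectors $\psi_p^{(j)}$ as the paper does; this is a cleaner packaging of the same computation.
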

%%%%%%%%%%%%%%%%%%%%%
%%%%%%%%%%%%%%%%%%%%%
\begin{proof} As mentioned after \eqref{eq:minmax}, we may consider $Q_d^{\t{per}}$ without loss of generality as a self-adjoint operator on the complex\footnote{We use the convention that $\langle\cdot,\cdot\rangle$ is conjugate linear in its first slot.} Hilbert space $\bigoplus_{j=1}^{d-1}\ell^2(\bT_n^d)$. This does neither change the spectrum nor the multiplicity of the eigenvalues of $Q_d^{\t{per}}$.

So, consider the plane wave basis $ (\ph_p)_{p\in \Gamma_{n}^*}$, defined by 
		\[ \ph_p(x) = n^{-\frac d2}e^{2\pi i px}, \,\forall x\in \bT_n^d, \; p=(p_1,\ldots, p_d)\in \Gamma_{n}^* = [0,n)^d \cap \frac{1}n\bZ^d. \]
Then, $(\ph_p)_{p\in \Gamma_{n}^*} $ is an orthonormal basis of $ \ell^2(\bT_n^d) \simeq \ell^2(\Gamma_n^*)$ and satisfies 		
		\be \label{eq:diag} 
		\begin{split}
		\partial_k \ph_p &= 2i e^{\pi ip_k} \sin(\pi p_k) \ph_p, \hspace{0.5cm}\partial_k^* \ph_p = -2ie^{-\pi ip_k} \sin(\pi p_k) \ph_p ,\\
		\partial_k^*\partial_k \ph_p &= \big(2 -2\cos(2\pi p_k)\big)\ph_p, \hspace{0.5cm} -\Delta  \ph_p  = 2 \sum_{k=1}^d \big(1-\cos(2\pi p_k)\big)\ph_p  ,\\
		\end{split}
		\ee
for every $j,k\in [d]$ and $p\in \Gamma_{n}^*$. Denoting by $ \widehat w \in \ell^2(\Gamma_{n}^*)$ the Fourier transform of $w\in \ell^2(\bT_n^d)$, defined so that
		\[\widehat w(p) = \langle \ph_p,w\rangle = \sum_{x\in\Lambda_{n-1}} \overline \ph_p(x) w(x) \;\text{ so that }\; w = \sum_{p\in\Gamma_{n}^*} \widehat w(p) \ph_p,\]
the identity \eqref{eq:Qperqform} reads in Fourier space 
		\be\label{eq:Qperqform2} \begin{split}
		\langle w,Q_d^{\t{per}} w\rangle &= \frac12\sum_{k,l=1}^{d-1} \big\|\partial_k w_l-\partial_lw_k\big\|^2 + \sum_{k=1}^{d-1} \big\|\partial_d w_k\big\|^2 \\
		& =  2 \sum_{k,l=1}^{d-1} \sum_{p\in\Gamma_{n}^*}  |e^{\pi ip_k} \sin(\pi p_k)\wh w_l(p)- e^{\pi ip_l}\sin(\pi p_l)\wh w_k(p)|^2 \\
		&\hspace{0.5cm}+ 2\sum_{k=1}^{d-1} \sum_{p\in\Gamma_{n}^*}  \big(1-\cos(2\pi p_d)\big) |\wh w_k(p)|^2. 
		\end{split} \ee
Here, we used Plancherel's theorem and \eqref{eq:diag}. 

Let us now determine $ \text{ker} (Q_d^{\t{per}})$ and $\text{spec}(Q_d^{\t{per}})$. By \eqref{eq:diag}, it is clear that for every $p\in\Gamma_n^*$, $ Q_d^{\t{per}}$ leaves the $(d-1)$-dimensional subspace
		\[ V_p=  \big\{ v \ph_p:v\in \bC^{d-1}\big\} \subset \bigoplus_{k=1}^{d-1}\ell^2(\bT_n^d)\] 
invariant. The spaces $V_p$, for $p\in\Gamma_n^*$, are mutually orthogonal, so it suffices to determine the kernel and spectrum of $Q_d^{\t{per}}$ restricted to $V_p$, for each $p\in\Gamma_n^*$. 

We consider several cases and start with $p=0\in\Gamma_n^*$. Then, clearly $V_p\subset \text{ker} (Q_d^{\t{per}})$. If $p\in \Gamma_n^*\setminus\{0\}$ is such that $ p=(0,\ldots,0,p_d)$ for some $p_d\neq 0$, on the other hand, then $V_p \subset  \big(\ker  Q_d^{\t{per}}\big)^\bot$, because for every $ v\in \bC^{d-1}$ we have that 
		\[ \big( Q_d^{\t{per}} v\ph_p\big)_i =  -\Delta v_i  \ph_p -\sum_{j=1}^{d-1}  \partial_i \partial_j^* v_j \ph_p =  2\big(1-\cos(2\pi p_d)\big)v_i\ph_p, \;\forall 1\leq i\leq d-1.   \]
In other words, $ V_p $ is an eigenspace of $Q_d^{\t{per}}$ to the eigenvalue $\epsilon_p =2\big(1-\cos(2\pi p_d)\big)>0$. 

Finally, let $p\in\Gamma_n^*\setminus\{0\}$ be such that $p_k\neq 0$ for some $ k\in [d-1]$. Then, if
		\[v\ph_p\in V_p\cap \ker Q_d^{\t{per}}= V_p\cap \ker \sqrt{Q_d^{\t{per}} },\] 
the identity \eqref{eq:Qperqform2} implies that
		\[\begin{split} v \ph_p &= \frac{v_k } { e^{\pi ip_k}\sin(\pi p_k)}\big( e^{\pi ip_1}\sin(\pi p_1),e^{\pi ip_2}\sin(\pi p_2), \ldots, e^{\pi ip_{d-1}}\sin(\pi p_{d-1})\big) \ph_p \\
		&= \frac{  v_k}{ 2i e^{\pi i p_k}\sin(\pi p_k)}  ( \partial_1\ph_p, \partial_2 \ph_p, \ldots, \partial_{d-1}\ph_p) \in \bigoplus_{j=1}^{d-1}\ell^2(\bT_n^d) 
		\end{split}\]
and that 
		\[ 2\big(1-\cos(2\pi p_d)\big) \sum_{j=1}^{d-1}|v_j|^2 =2\big(1-\cos(2\pi p_d)\big) |v|^2= 0.  \]
This means that $V_p \cap \ker  Q_d=\{0\} $ if $p_d\neq 0$ and if $p_d=0$, then 
		\[V_p\cap \ker Q_d^{\t{per}} = \t{span} \big( \psi_p^{(d-1)} \big) \;\;\text{ for }\; \;\psi_p^{(d-1)}=( \partial_1\ph_p, \partial_2 \ph_p, \ldots, \partial_{d-1}\ph_p)\]
is one-dimensional. Combined with the previous cases, note in particular that
		\[  \text{dim} \ker \big(Q_d^{\t{per}}\big) \leq \t{dim}\big(\t{span}\big((\ph_p)_{p=(p_1,\ldots,p_{d-1},0)\in\Gamma_n^*}\big) = \big|\big\{p\in\Gamma_{n}^*:p_d=0 \big\}\big|\leq C n^{d-1}, \]
which proves \eqref{eq:propfin1}. To determine the spectrum of $Q_d^{\t{per}}$ in $V_p$ for $p\in\Gamma_n^*\setminus\{0\}$ such that $p_k\neq 0$ for some $ k\in [d-1]$, consider the linearly independent vectors 
		\[ \psi_p^{(j)} = (0, \ldots, \partial_k^* \ph_p,0\ldots, 0, -\partial_j^*\ph_p,0, \ldots,0) \in \bigoplus_{k=1}^{d-1}\ell^2(\bT_n^d), \; 1\leq j\leq d-2.\]
Here, $ \partial_k^* \ph_p$ is in the $j$-th slot of $\psi_p^{(j)}$ and $-\partial_j^*\ph_p$ is in the $k$-th slot. Then we have that
%for each $j\in[d-2]$
%		\[ \big\langle \psi_p^{(d-1)}, \psi_p^{(j)}\big\rangle =  \langle \partial_j \ph_p,\partial_k^* \ph_p \rangle - \langle \partial_k \ph_p,\partial_j^* \ph_p \rangle=0 \]	
		\[\begin{split}
		\big(Q_d^{\t{per}} \psi_p^{(d-1)}\big)_i = -\Delta  \partial_i \ph_p  -\sum_{l=1}^{d-1}  \partial_i \partial_l^*  \partial_l \ph_p = \partial_{d}^*  \partial_d (\partial_i\ph_p) = 2\big(1-\cos(2\pi p_d)\big)(\partial_i\ph_p)
		\end{split}\]
so that $(Q_d^{\t{per}} \psi_p^{(d-1)}) =2\big(1-\cos(2\pi p_d)\big) \psi_p^{(d-1)}$. For $j\in [d-2]$, we get on the other hand
		\[\begin{split}
		\big(Q_d^{\t{per}} \psi_p^{(j)}\big)_i &=  \big(-\Delta \psi_p^{(j)}\big)_i  - \partial_i \partial_j^* \big(\psi_p^{(j)}\big)_{j} - \partial_i \partial_k^* \big(\psi_p^{(j)}\big)_{k}  \\
		&=  \big(-\Delta \psi_p^{(j)}\big)_i -  \partial_i \partial_j^*\partial_k^* \ph_p + \partial_i \partial_k^* \partial_j^*\ph_p\\
		&= \big(-\Delta \psi_p^{(j)}\big)_i,
		\end{split}\]
so that $ (Q_d^{\t{per}} \psi_p^{(j)} ) = 2\sum_{l=1}^{d}\big(1-\cos(2\pi p_l)\big) \psi_p^{(j)}= \epsilon_p \psi_p^{(j)} $ and $\langle \psi_p^{(d-1)}, \psi_p^{(j)}\rangle=0$.  

In summary, the previous steps show that $\text{spec}(Q_d^{\t{per}}) = (\epsilon_p)_{p\in\Gamma_n^*} $, and in particular
		\[ \lambda_1\big( \Pi_{\Omega_{n}^{1,\t{p},+}} Q_d^{\t{per}}\Pi_{\Omega_{n}^{1,\t{p},+}}\big) \geq 2 \big(1-\cos(2\pi /n )\big)\geq \frac{C}{n^2},\]
which concludes both \eqref{eq:propfin2} and \eqref{eq:propfin3}.

Let us now switch to the proof of \eqref{eq:propfin4} and let us assume that $V\subset  \Omega_{n}^{1,\t{p},+}$ is a linear subspace such that $\t{dim}\, V^\bot \leq Cn^{d-1}$. By applying \eqref{eq:minmax}, \eqref{eq:projconseq1}, \eqref{eq:propfin3} and the boundedness of $ Q_d^{\t{per}}$ in $\bigoplus_{j=1}^{d-1} \ell^2(\bT_n^d)$, we get 
		\[\begin{split}
		\frac{1}{2n^d} \t{tr} \log \big(\Pi_{\Omega_{n}^{1,\t{p},+}} Q_d^{\t{per}} \Pi_{\Omega_{n}^{1,\t{p},+}} \big)& = \frac{1}{2n^d}\sum_{j=1}^{k_+} \log \lambda_j\big(\Pi_{\Omega_{n}^{1,\t{p},+}} Q_d^{\t{per}} \Pi_{\Omega_{n}^{1,\t{p},+}} \big)\\
		&\geq \frac{1}{2n^d}\sum_{j=1+k_+-k_V}^{k_+}\!\! \!\!\log \lambda_j\big(\Pi_{\Omega_{n}^{1,\t{p},+}} Q_d^{\t{per}} \Pi_{\Omega_{n}^{1,\t{p},+}} \big) - O(\log n/n)\\
		&\geq\frac{1}{2n^d}\sum_{j=1 }^{k_V} \log \lambda_j\big(\Pi_{V} Q_d^{\t{per}} \Pi_{V} \big) - O(\log n/n)\\
		%&\geq\frac{1}{2n^d}\sum_{j=1 }^{k_+} \log \lambda_j\big(\Pi_{\Omega_{n}^{1,\t{p},+}}Q_d^{\t{per}} \Pi_{\Omega_{n}^{1,\t{p},+}} \big) - O(\log n/n)\\
		& \geq \frac{1}{2n^d} \t{tr} \log \big(\Pi_{\Omega_{n}^{1,\t{p},+}} Q_d^{\t{per}} \Pi_{\Omega_{n}^{1,\t{p},+}} \big)- O(\log n/n),
		\end{split}\]
where we set $k_+ = \text{dim}\, \Omega_{n}^{1,\t{p},+}$ and $k_V=\text{dim}\, V$, such that $ k_+-k_V = O(n^{d-1})$. Hence
		\be \label{eq:subred}  \frac{1}{2n^d} \t{tr} \log \big(\Pi_V Q_d^{\t{per}} \Pi_V \big) = \frac{1}{2n^d} \t{tr} \log \big(\Pi_{\Omega_{n}^{1,\t{p},+}} Q_d^{\t{per}} \Pi_{\Omega_{n}^{1,\t{p},+}} \big) + O(\log n/n). \ee
		
Finally, we characterize $K_d$ in terms of $Q_d^{\t{per}}$, which uses \eqref{eq:linkper}, \eqref{eq:linkper2} and \eqref{eq:subred}. First of all, it is clear that $\big(\Omega_{n}^{1,\t{a,p}}\big)^\bot \subset \bigoplus_{j=1}^{d-1}\ell^2(\bT_n^d) $ satisfies
		\[ \t{dim}\,\big(\Omega_{n}^{1,\t{a,p}}\big)^\bot  \leq Cn^{d-1},\] 	 	
because $ \Omega_{n}^{1,\t{a,p}}\subset \bigoplus_{j=1}^{d-1}\ell^2(\bT_n^d) $ in \eqref{eq:dirichper} is characterized by $O(n^{d-1})$ zero constraints. Since $\t{dim} \ker \big(Q_d^{\t{per}}\big)\leq Cn^{d-1}$ as well, we also have that
		\[\text{dim} \,  \big(\Omega_{n}^{1,\t{a,p}}\cap  \Omega_{n}^{1,\t{p},+}\big)^\bot\leq Cn^{d-1}. \]
This follows e.g.\ from the dimension formula
		\[\begin{split} 
		&\text{dim} \, \big(\Omega_{n}^{1,\t{a,p}}\cap  \Omega_{n}^{1,\t{p},+}\big)+\text{dim} \,  \big(\Omega_{n}^{1,\t{a,p}} +\Omega_{n}^{1,\t{p},+}\big) = \text{dim} \,  \big(\Omega_{n}^{1,\t{a,p}} \big)+\text{dim} \,  \big(  \Omega_{n}^{1,\t{p},+}\big)
		\end{split}\]
which, combined with $\text{dim} \,  \big(\Omega_{n}^{1,\t{a,p}} +\Omega_{n}^{1,\t{p},+}\big)\leq \text{dim}\,\bigoplus_{j=1}^{d-1}\ell^2(\bT_n^d)$, implies that 
		\[\text{dim} \, \big(\Omega_{n}^{1,\t{a,p}}\cap  \Omega_{n}^{1,\t{p},+}\big)=\text{dim}\,\bigoplus_{j=1}^{d-1}\ell^2(\bT_n^d)-\text{dim} \,  \big(\Omega_{n}^{1,\t{a,p}}\cap  \Omega_{n}^{1,\t{p},+}\big)^\bot \geq \text{dim}\,\bigoplus_{j=1}^{d-1}\ell^2(\bT_n^d) - C n^{d-1}. \]
%%% alternative dimension proof: dim Omega_+ = dim ker Pi_(\bot,0)_{| Omega_+} + dim ran Pi_(\bot,0); range is bounded by dim L2; dim Omega_+ = dim L2- O(n^{d-1}) and v \in ker Pi_(\bot,0)_{| Omega_+} if v \in Omega_+ cap Omega_0; so the rank formula implies with dim ran Pi_(\bot,0) = O(n^{d-1}) that dim ker (...) = dim L2 - O(n^{d-1})
We can thus apply \eqref{eq:subred} to $V=\Omega_{n}^{1,\t{a,p}}\cap  \Omega_{n}^{1,\t{p},+}\subset \Omega_{n}^{1,\t{p},+} $. Since also $V\subset \Omega_{n}^{1,\t{a,p}}$, we can also apply \eqref{eq:linkper2}. Combined with \eqref{eq:linkper}, this proves that 
		\[\begin{split} K_d  = - \lim_{n\to\infty}  \frac{1}{2n^d} \t{tr} \log \big(\Pi_{\Omega_{n}^{1,\t{a,p}}} Q_d^{\t{per}} \Pi_{\Omega_{n}^{1,\t{a,p}}} \big)&=  -\lim_{n\to\infty} \frac{1}{2n^d} \t{tr} \log \big(\Pi_V Q_d^{\t{per}} \Pi_V \big)  \\
		&= -\lim_{n\to\infty} \frac{1}{2n^d} \t{tr} \log \big(\Pi_{\Omega_{n}^{1,\t{p},+}} Q_d^{\t{per}} \Pi_{\Omega_{n}^{1,\t{p},+}} \big).
		\end{split}\]
This implies \eqref{eq:propfin5} and concludes the proof of the proposition.
\end{proof}
Based on Proposition \ref{prop:Kdfinal}, we are finally prepared to compute $K_d$. 

\begin{cor}\label{lm:Kd}
Let $d\geq 2$. Then, we have that
		\[\begin{split}
		K_d %&= -\frac{d-1}2 \log 2- \frac{1}2 \int_0^1  \emph{d}x\,  \log  \big(1-\cos(2\pi x)\big)\\
		&=- \frac{d-2}2 \int_{[0,1]^d} \emph{d}x_1\ldots \emph{d}x_d\,  \log \sum_{k=1}^d 2\big(1-\cos(2\pi x_k)\big).
		\end{split}\]
\end{cor}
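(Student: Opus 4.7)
Given Proposition \ref{prop:Kdfinal}, it suffices to compute the limit in \eqref{eq:propfin5} explicitly. My plan is to unpack the spectral decomposition of $Q_d^{\t{per}}$ already derived inside the proof of Proposition \ref{prop:Kdfinal} and then pass to Riemann sums.

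Recall from that proof that $Q_d^{\t{per}}$ leaves each $(d-1)$-dimensional subspace $V_p = \{v\ph_p : v\in \bC^{d-1}\}$ invariant for $p \in \Gamma_n^*$, and that these subspaces are mutually orthogonal. The same proof identifies the non-zero eigenvalues on $V_p$, which I organize in three regimes:
\begin{itemize}
\item for each $p = (0,\ldots,0,p_d)$ with $p_d \neq 0$ (there are $n-1$ such $p$): the eigenvalue $2(1-\cos(2\pi p_d))$ with multiplicity $d-1$;
\item for each $p$ with $(p_1,\ldots,p_{d-1}) \neq 0$ and $p_d = 0$: the eigenvalue $\epsilon_p$ with multiplicity $d-2$ (the $\psi_p^{(d-1)}$ direction lies in $\ker Q_d^{\t{per}}$ and is excluded);
\item for each $p$ with $(p_1,\ldots,p_{d-1}) \neq 0$ and $p_d \neq 0$: the eigenvalue $2(1-\cos(2\pi p_d))$ once (from $\psi_p^{(d-1)}$) together with $\epsilon_p$ with multiplicity $d-2$ (from $\psi_p^{(1)}, \ldots, \psi_p^{(d-2)}$).
\end{itemize}
Collecting the three regimes, I expect
\[ \t{tr} \log\big(\Pi_{\Omega_n^{1,\t{p},+}} Q_d^{\t{per}} \Pi_{\Omega_n^{1,\t{p},+}}\big) = (n^{d-1}+d-2)\!\!\!\sum_{p_d\in \frac{1}{n}\bZ\cap (0,1)}\!\!\!\log\big(2(1-\cos(2\pi p_d))\big) + (d-2)\!\!\!\sum_{\substack{p\in\Gamma_n^*\\ (p_1,\ldots,p_{d-1})\neq 0}}\!\!\!\log \epsilon_p, \]
where the coefficient $n^{d-1}+d-2$ arises by combining the multiplicities $(d-1)$ from the first bullet with the $(n^{d-1}-1)$ copies of $\log(2(1-\cos(2\pi p_d)))$ picked up per $p_d \neq 0$ from the third bullet.

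Dividing by $2n^d$, using $(n^{d-1}+d-2)/n^d = 1/n + O(n^{-d})$ and the lower bound $\lambda_1(\Pi_{\Omega_n^{1,\t{p},+}} Q_d^{\t{per}} \Pi_{\Omega_n^{1,\t{p},+}}) \geq C/n^2$ from \eqref{eq:propfin3} (which guarantees each logarithmic summand is at most $O(\log n)$), the $O(n^{-d})$ correction contributes $O(\log n / n^{d-1})$ and is negligible. Hence
\[ K_d = -\frac{1}{2}\lim_{n\to\infty}\frac{1}{n}\sum_{p_d\in\frac{1}{n}\bZ\cap(0,1)}\log\big(2(1-\cos(2\pi p_d))\big) - \frac{d-2}{2}\lim_{n\to\infty}\frac{1}{n^d}\sum_{\substack{p\in\Gamma_n^*\\ (p_1,\ldots,p_{d-1})\neq 0}}\log \epsilon_p, \]
and identifying each limit as the Riemann sum for the corresponding integral over $[0,1]$ or $[0,1]^d$, pulling out a $\log 2$ from each logarithm, and using the identity $(d-1)/2 = 1/2 + (d-2)/2$ yields the stated formula.

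The only delicate point is Riemann sum convergence at the logarithmic singularities of the integrands. In one variable, $x\mapsto \log(1-\cos(2\pi x))$ is monotone on each of $[0,1/2]$ and $[1/2,1]$ with an integrable $\log|x|$-type singularity at the endpoints; the classical monotone Riemann criterion applies after excluding a shrinking neighborhood of the endpoints, whose remaining contribution is bounded by $O(\log n / n)$ using the $\lambda_1 \geq C/n^2$ estimate. In $d$ variables, $\log \epsilon(x) = 2\log|x| + O(1)$ near the origin, which is $L^1([0,1]^d)$; I split $[0,1]^d$ into a small ball around $0$ (controlled uniformly in $n$ by integrability of $\log|x|$ together with the $\lambda_1 \geq C/n^2$ lower bound) and its complement (where the integrand is continuous so standard Riemann sum convergence applies). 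The $O(n)$ lattice points excluded from $\Gamma_n^*$ in the $\epsilon_p$ sum (those with $(p_1,\ldots,p_{d-1}) = 0$) contribute at most $O(n \log n)/n^d \to 0$ and are harmless. The hardest small piece of the argument is thus controlling the singular contribution uniformly in $n$, but this is essentially a direct comparison with $\int_{|x|<\delta}\log|x|\,\t{d}x$ once the $\lambda_1 \geq C/n^2$ input is invoked.
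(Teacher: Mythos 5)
Your proposal is correct and follows essentially the same route as the paper: both extract the eigenvalue structure on the invariant subspaces $V_p$ from the proof of Proposition \ref{prop:Kdfinal}, sum the logarithms of the eigenvalues $2(1-\cos(2\pi p_d))$ and $\epsilon_p$ with the appropriate multiplicities $1$ and $d-2$, and pass to Riemann sums, with the $\log 2$ factors recombining into the $-\frac{d-1}{2}\log 2$ term. The only (harmless) difference is bookkeeping: the paper simply discards all $p$ with some coordinate equal to $0$ as an $O(n^{d-1}\log n)$ error before summing, whereas you track those boundary regimes explicitly and are somewhat more careful about the convergence of the Riemann sums at the logarithmic singularities, where the pointwise comparison $\epsilon_p\gtrsim |p/n|^2$ (rather than the cruder $\lambda_1\geq C/n^2$ alone) is indeed the right input.
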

\begin{proof} We use the same notation as in the proof of Proposition \ref{prop:Kdfinal} and denote by $\Pi_{V_p}$, for $p\in \Gamma_n^*$, the orthogonal projection of $\Omega_{n}^{1,\t{p},+}  $ to $V_p$. From the proof of Proposition \ref{prop:Kdfinal}, we recall in particular that for every $ p=(p_1,\ldots,p_d)\in\Gamma_n^* $ with $p_1,\ldots,p_d\neq 0$, we have
		\[  \lambda_1\big( \Pi_{V_p}Q_d^{\t{per}}\Pi_{V_p} \big) =2\big(1-\cos(2\pi p_d)\big) < \lambda_k \big( \Pi_{V_p}Q_d^{\t{per}}\Pi_{V_p} \big) = \epsilon_p,\; \forall \,2\leq k\leq d-1.  \]
Combined with \eqref{eq:propfin5} and the mutual orthogonality of the spaces $V_p $, we readily conclude 
		\be\label{eq:Kdfinal}\begin{split}
		K_d %&= - \lim_{n\to\infty}\frac{1}{2n^d} \t{tr} \log \big(\Pi_{\Omega_{n}^{1,\t{p},+}} Q_d^{\t{per}} \Pi_{\Omega_{n}^{1,\t{p},+}} \big)\\
		& = - \lim_{n\to\infty}\frac{1}{2n^d}  \sum_{\substack{ p=(p_1,\ldots,p_d) \in\Gamma_{n}^*:\\ p_1,\ldots, p_d\neq 0 }}\t{tr} \log \big(\Pi_{V_p} Q_d^{\t{per}} \Pi_{V_p} \big)+ \lim_{n\to\infty}\frac{1}{2n^d}O(n^{d-1}\log n) \\
		& =- - \frac12 \lim_{n\to\infty}  \sum_{p_1, \ldots,p_d=1}^{n-1}  \frac{1}{n^d } \log  2\big(1-\cos(2\pi p_d/n)\big)   \\
		&\hspace{0.5cm}  - \frac{d-2}2 \lim_{n\to\infty}   \sum_{p_1, \ldots,p_d=1}^{n-1}  \frac{1}{n^d }   \log \sum_{k=1}^{d} 2\big(1-\cos(2\pi p_k/n)\big)   \\
		%& = - \frac{d-2}2\lim_{n\to\infty} \sum_{p_1, \ldots,p_k=1}^{n-1}  \frac{1}{n^d } \log \sum_{k=1}^{d}\big(1-\cos(2\pi p_k)\big)   -\frac{d-1}2 \log 2\\
		&=  - \frac{1}2 \int_0^1  \emph{d}x\,  \log  2\big(1-\cos(2\pi x)\big)\\
		&\hspace{0.4cm}- \frac{d-2}2 \int_{[0,1]^d} \emph{d}x_1\ldots \emph{d}x_d\,  \log \sum_{k=1}^d 2\big(1-\cos(2\pi x_k)\big)\\
		&=  - \frac{d-2}2 \int_{[0,1]^d} \emph{d}x_1\ldots \emph{d}x_d\,  \log \sum_{k=1}^d 2\big(1-\cos(2\pi x_k)\big),
		\end{split}\ee 
where, in the last step, we used the elementary identity $\int_0^1  \emph{d}x\, \log  2\big(1-\cos(2\pi x)\big)=0$.
\end{proof}
		
\begin{proof}[Proof of Theorem \ref{thm:main}.] Proposition \ref{prop:removeRd} proves the characterization \eqref{eq:Kdlattice} of $K_d$ and from Corollary \ref{lm:Kd} we conclude \eqref{eq:K234}.
\end{proof}
%%%%%%%%%%%%%%%%%%%%%%%%%%%%%%%%%%%%%%
%%%%%%%%%%%%%%%%%%%%%%%%%%%%%%%%%%%%%%
%%%%%%%%%%%%%%%%%%%%%%%%%%%%%%%%%%%%%%
\vspace{0.2cm}
\noindent\textbf{Acknowledgements.} C.\ B.\ thanks I.\ Chevyrev for helpful comments and two anonymous referees for several helpful remarks improving a previous version of this manuscript, in particular a simplification of \eqref{eq:K234} and its connection to the GFF as summarized in \tbf{Remark} 1) after Theorem \ref{thm:main}. C.\ B.\ acknowledges support by the Deutsche Forschungsgemeinschaft (DFG, German Research Foundation) under Germany’s Excellence Strategy – GZ 2047/1, Projekt-ID 390685813. 
\pagebreak 

\vspace{0.4cm}
\noindent {\Large \textbf{Data Availability}}
\vspace{0.4cm}

\noindent Data sharing is not applicable as no new data were created or analyzed in this study.

\vspace{0.4cm}
\noindent {\Large \textbf{Conflict of Interests}}
\vspace{0.4cm}

\noindent The author has no conflict of interest to declare that is relevant to the article's content.
%%%%%%%%%%%%%%%%%%%%%%%%%%%%%%%%%%%%%%
%%%%%%%%%%%%%%%%%%%%%%%%%%%%%%%%%%%%%%
%%%%%%%%%%%%%%%%%%%%%%%%%%%%%%%%%%%%%%

%\appendix 
%\section{}\label{appx}

%%%%%%%%%%%%%%%%%%%%%%%%%%%%%%%%%%%%%%
%%%%%%%%%%%%%%%%%%%%%%%%%%%%%%%%%%%%%%
%%%%%%%%%%%%%%%%%%%%%%%%%%%%%%%%%%%%%%	

\end{document}